\newcommand{\ltl}{{LTL}\xspace}
\newcommand{\ctl}{{CTL}\xspace}
\newcommand{\ctlst}{CTL*\xspace}
\newcommand{\pctl}{{PCTL}\xspace}
\newcommand{\pctlst}{{PCTL*}\xspace}
\newcommand{\mutl}{{$\mu$TL}\xspace}
\newcommand{\pmutl}{{P$\mu$TL}\xspace}
\newcommand{\papa}{{PAPA}\xspace}
\newcommand{\sat}{\textsc{Satisfiability}\xspace}
\newcommand{\emp}{\textsc{Emptiness}\xspace}
\newcommand{\up}{\textbf{UP}\xspace}
\newcommand{\coup}{\textbf{co-UP}\xspace}
\newcommand{\np}{\textbf{NP}\xspace}
\newcommand{\conp}{\textbf{co-NP}\xspace}
\newcommand{\pspace}{\textbf{PSPACE}\xspace}
\newcommand{\exptime}{\textbf{EXPTIME}\xspace}
\newcommand{\calA}{\mathcal{A}}
\newcommand{\calD}{\mathcal{D}}
\newcommand{\calZ}{\mathcal{Z}}
\newcommand{\opX}{\mathsf{X}}
\newcommand{\opU}{\mathsf{U}}
\newcommand{\opR}{\mathsf{R}}
\newcommand{\opF}{\mathsf{F}}
\newcommand{\opG}{\mathsf{G}}
\newcommand{\funcD}{\mathbf{D}}
\newcommand {\depth}[1]{||{#1}||}
\newcommand {\pmat}[1] {\emph{\textbf{#1}}}
\newcommand {\lang}{\mathscr{L}}
\newcommand {\winning}{\mathscr{W}}
\newcommand {\binder}{\mathscr{D}}
\newcommand {\semantics}[1]{\llbracket{#1}\rrbracket}
\DeclareMathOperator*{\cyl}{cyl}
\DeclareMathOperator*{\proj}{proj}
\DeclareMathOperator*{\prob}{prob}
\DeclareMathOperator*{\cls}{cls}
\begin{document}
\title{A Simple Probabilistic Extension of Modal Mu-calculus}
 \author{ Wanwei Liu\inst{1}
        \and Lei Song\inst{2}
        \and Ji Wang\inst{1}
        \and Lijun Zhang\inst{3}
 }

\institute
{
  \inst{}
School of Computer Science, \\National University of Defense Technology, Changsha, P. R. China
 \and
 \inst{}
Center of Quantum Computation and Intelligence Systems, \\University of Technology, Sydney, Australia
   \and
  \inst{} 
  State Key Laboratory of Computer Science, \\Institute of Software, Chinese Academy of Sciences   
}

\maketitle

\begin{abstract}
    Probabilistic systems are an important theme in AI domain.
    As the specification language, \pctl
    is the most frequently used logic for reasoning about
    probabilistic properties. In this paper, we present a natural and succinct
    probabilistic extension of $\mu$-calculus, another prominent logic in
    the concurrency theory. We study the relationship with
    \pctl. Surprisingly, the expressiveness is highly orthogonal
    with \pctl. The proposed logic captures some useful
    properties which cannot be expressed in \pctl.
    We investigate the model checking and satisfiability problem, and show that the model
    checking problem is in \up$\cap$\coup, and  the satisfiability
    checking can be decided via reducing into solving parity games. This is in contrast to PCTL as well,
    whose satisfiability checking is still an open problem.
\end{abstract}

\section{Introduction}
\label{Sec: Introduction}

Temporal logics are heavily used in theoretical computer science and AI-related fields.
Among those, modal $\mu$-calculus receives a lot of attraction ever since Kozen's seminal work \cite{Koz83}.
See for example, \cite{BB87,Kat98,Wal00,Ber02}.
 Moreover, various temporal logics including \ltl \cite{Pnu77}, \ctl
 \cite{EC80}, \ctlst \cite{EH86} are extensively studied.
 It is known that their expressiveness  is strictly less \cite{Dam95} than $\mu$-calculus (aka. \mutl),
and their model checking algorithm has been proposed: for \ctl the problem can be solved in polynomial time,
 whereas for \ltl the problem is \pspace-complete \cite{SC85}.

Probabilistic systems, such as Markov chains and Markov decision processes, are an important
theme in AI domain.
To reason about properties for probabilistic systems, the logic \ctl was first extended with probabilistic quantifiers in
\cite{HJ94} , resulting in the logic \pctl.
Intuitively, $(a \opU^{\geq 0.9} b)$ means that the probability of reaching $b$-states along $a$-states is at least $0.9$.
At the same time, probabilistic \ltl  and its extension \pctlst have all been studied.
As in the classical setting, model checking problem for \pctl can be solved in polynomial time, whereas only exponential algorithms are known for \ltl \cite{CSS03}.
There have also been several attempts to extend \mutl with probabilities in the literature. As we shall discuss in the related work, the extensions are either highly
non-trivial in terms of the complexity of the corresponding model checking and satisfiability problems, or hindered from the restriction of
fixpoint nesting.

We propose a natural and succinct extension of \mutl in this paper, and name it \pmutl.
The logic is acquired by equipping the next operator with probability quantifiers, and keeping other parts
as standard \mutl.
We have for instance the formula $\nu Z.(a\wedge\opX^{\geq 0.8}Z)$.
We investigate the model checking, expressiveness,
and satisfiability problems of \pmutl.

In detail, we first investigate the model checking problem of \pmutl upon Markov chains.
It turns out to be a straightforward adaptation of the classical
algorithms for \mutl, and the complexity remains in \up $\cap$ \coup.
We then give a comprehensive study on the expressiveness of \pmutl by comparing with \pctl,
and prove that \pmutl is orthogonal with \pctl in expressiveness. However, for the qualitative
fragments (i.e., probabilities may appear in a formula are only $0$ and $1$),
we show that qualitative \pmutl is strictly more expressive (w.r.t. finite Markov chains).
On the other side, the satisfiability
checking is quite challenging: we exploit the notion of probabilistic alternating parity automata (\papa, for short),
and reduce the \sat problem into the \emp problem of \papa.
Further, this is reduced to solving parity games, and it is shown that both of these two problems are in 2\exptime.
This is in contrast to \pctl as well,
whose \sat checking is still an open problem (cf. \cite{BFKK08,BFS12}).

\paragraph{An illustrating example}
We introduce a running example to motivate our work: Suppose there is
a hacker trying to attack a remote server. The hacker has
a supercomputer at hand and is trying to guess the password in a brute-force manner. For
simplicity, we assume the password is a sequence of $l$ letters, each
of which is from `0'-`9', `a'-`z', and `A'-`Z'. Therefore, the total number of
possible passwords is $n=62^l$. The hacker let the supercomputer
randomly generate a password, and see whether the decryption
succeeds. If yes, the hacker wins; otherwise he
tries with another one. However, if the
supercomputer generates three wrong passwords in a row, it will be
blocked for a certain amount of time until it can start another round
of attacking | assuming that the password may be changed during the
blocked moment, hence it does not make sense for the supercomputer to store all
generated passwords.
The whole process is illustrated in Fig.~\ref{Fig: Hacker}.
Starting from $s_1$, we can see that the
probability of eventually reaching $\mathit{attacked}$, i.e., the hacker
decrypts successfully, equal 1, no matter how
big $l$ is  (hence, the \pctl formula $\opF^{\geq 1}attacked$ holds),
and we may conclude that the system is unsafe | this is of
course against our intuition, as such system is considered to be safe if $l$ is big enough.
However, as we will show later, all
\pctl formulae are not capable of expressing this
property. By making use of \pmutl, such property of security can be characterized
easily as follows: $\nu Z.(\neg\mathit{attacked}\land\opX^{\ge p}Z))$ with
$p=\nicefrac{n-3}{n-2}$, where $\neg\mathit{attacked}$ denotes all other
states in Fig.~\ref{Fig: Hacker} different from $s_5$.

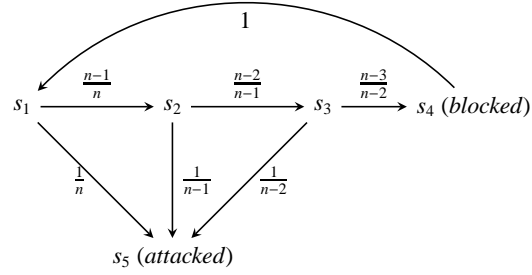
\begin{figure}[!t]
  \centering
\scalebox{1}{
 \begin{tikzpicture}[->,>=stealth,auto,node distance=2cm,semithick,scale=1,every node/.style={scale=1}]
	\tikzstyle{state}=[minimum size=20pt,circle,draw,thick]
	\tikzstyle{stateNframe}=[]
	every label/.style=draw
        \tikzstyle{blackdot}=[circle,fill=black, minimum size=6pt,inner sep=0pt]
     \node[stateNframe](s0){$s_1$};
     \node[stateNframe](s1)[right of=s0]{$s_2$};
     \node[stateNframe](s2)[right of=s1]{$s_3$};
     \node[stateNframe](s3)[right of=s2]{$s_4\ (\mathit{blocked})$};
     \node[stateNframe](ss)[below of=s1]{$s_5\ (\mathit{attacked})$};
     \path (s0) edge node{$\frac{n-1}{n}$} (s1)
                edge node[left]{$\frac{1}{n}$} (ss)
           (s1) edge node{$\frac{n-2}{n-1}$} (s2)
                edge node{$\frac{1}{n-1}$} (ss)
           (s2) edge node{$\frac{n-3}{n-2}$} (s3)
                edge node[right]{$\frac{1}{n-2}$} (ss)
           (s3) edge[bend right=45] node{1} (s0);

\end{tikzpicture}}
  \caption{An illustration of the hacking process}
  \label{Fig: Hacker}\vspace*{-1em}
\end{figure}


\paragraph{Motivation from AI perspective} The presented logic has
the following potential application in AI domain:
\begin{itemize}
  \item First of all, Markov chains and Markov decision processes are the
     basic models in several areas of AI. As a logic with semantics defined
     w.r.t. such models, it could definitely be used in designating probability-relevant
     properties upon them. Particularly, the properties that could not be expressed by \pctl.
  \item Motion planing is an important topic in AI area, where
    standard \mutl has once been adopted \cite{bhatia2011motion}, because of its powerful expressiveness and
    the decidability of its \sat problem.
Thus, we expect that \pmutl could be used in stochastic motion planning | since,
    \pmutl is a decidability-preserving extension of \mutl.
  \item Fixpoints play an important role in mathematics and
    computer science. In AI area, it is used to designate non-terminating behaviors of intelligent systems,
    such as maintenance goals \cite{DBLP:journals/logcom/Singh98}.
    Fixpoints act as the elementary ingredients in \pmutl, hence such logic can also be used
    in such a situation.
\end{itemize}

\paragraph{Related work}
Probabilistic extensions of \mutl have been studied by
many authors:
e.g., $\mu$-calculi proposed in \cite{MM97,HK97,AM01,MM02,MM07,Mio12b}
interpret a formula as a function from states to real values in $[0,1]$, whose
semantics is different from \pmutl. A further extension of
$\mu$-calculus was proposed in~\cite{Mio12a}, which is able to
encode the full \pctl. However, the model checking and \sat algorithms are
still unknown for these calculi and are ``far from
trivial''~\cite{Mio12b}. The other probabilistic $\mu$-calculus was
introduced in~\cite{CIN05} along with a model checking
algorithm for it. Moreover, it is able to encode \pctl formulae as
well. However, that calculus  only allows
alternation-free formulae (cf. \cite{EL86}).

Very recently | and independently |, Castro, Kilmurray, and Piterman
present another extension by adding fixpoints to full \pctl
\cite{CKP15}. The calculus they introduced is more expressive than
logics \pctl and \pctlst. Moreover, it is also easy to see that it is
a proper super logic of our logic \pmutl as well. They show the model
checking problem is in \np$\cap$\conp.  We note that some examples
in our paper are similarly investigated in
\cite{CKP15}. Since the logic in \cite{CKP15} subsumes \pctl, its \sat
problem is also left open. However in this paper we show \sat
of \pmutl could be reduced to solving parity games, which makes
this problem solvable in 2\exptime.


\section{Preliminaries}

In this paper, we fix a countable set $\calA$ of \emph{atomic propositions}, ranging over $a, b, a_1$ etc,
and fix a countable set $\calZ$ of \emph{formula variables}, ranging over $Z, Z_1$ etc.

A \emph{Markov chain} is a tuple $M=(S,\pmat{T},L)$, where
 $S$ is a finite set of \emph{states};
 $\pmat{T}: S\times S\to[0,1]$ is the matrix of transition-probabilities, fulfilling
$\sum_{s'\in S}\pmat{T}(s,s')=1$ for every $s\in S$; and
 $L: S\to 2^\calA$ is the labeling function.
 A \emph{pointed Markov chain} is
a pair $(M,s)$ where $M$ is a Markov chain $(S,\pmat{T},L)$ and $s\in S$ is the \emph{initial state}.

An (infinite) \emph{path} $\pi$ of $M$ is an infinite sequence of states
$s_0,s_1,\ldots$, such that  $s_i\in S$ and $\pmat{T}(s_i,s_{i+1})>0$ for
each $i$.
A basic \emph{cylinder} $\cyl(s_0,s_1,\ldots,s_n)$ of $M$ is the set of infinite
paths having $s_0,s_1,\ldots, s_n$ as the prefix.

According to the standard theory of Markov process, the pointed Markov chain $(M,s)$
uniquely derives a \emph{measure space} $(\Pi_{M,s},\Delta_{M,s},\prob_{M,s})$ where
$\Pi_{M,s}$ consists of all infinite paths of $M$; $\Delta_{M,s}$ is the
minimal Borel field containing all basic cylinder of $M$
(i.e., $\Delta_{(M,s)}$ is closed under complementation and countable intersection);
and the measuring function $\prob_{M,s}$ fulfills:
$\prob_{M,s}(\cyl(s_0,s_1,\ldots,s_n))$ equals $0$ if $s\neq s_0$, and equals $\prod_{i<n}\pmat{T}(s_i,s_{i+1})$ otherwise.
We say a set $P\subseteq\Pi_{M,s}$ is \emph{measurable} if $P\in\Delta_{M,s}$.
\cite{Var85} shows that the intersection of $\Pi_{M,s}$
and an omega-regular set must be measurable.

The syntax of  \pctl formulae is described by the following abstract grammar:
\begin{displaymath}
f \Coloneqq  \top\mid \bot\mid a \mid \neg a \mid \opX^{\sim p} f \mid f\wedge f\mid f\vee f\mid
f\opU^{\sim p} f \mid f\opR^{\sim p} f
\end{displaymath}
where $\sim\in\{>,\geq\}$ and $p\in[0,1]$.
We also abbreviate $\top\opU^{\sim p} f$ and $\bot\opR^{\sim p} f$ as
$\opF^{\sim p} f$ and $\opG^{\sim p} f$, respectively.

Semantics of a \pctl formula is given w.r.t. a Markov chain.
For each \pctl formula $f$ and a Markov chain $M=(S,\pmat{T},L)$,
we will use $\semantics{f}_M$ to denote the subset of $S$ satisfying $f$,
inductively defined as follows.
\begin{itemize}
  \item $\semantics{\top}_M=S$;  $\semantics{\bot}_M=\emptyset$.
  \item $\semantics{a}_M=\{s\in S\mid a\in L(s)\}$; $\semantics{\neg a}_M=\{s\in S\mid a\not\in L(s)\}$.
  \item $\semantics{\opX^{\sim p}f}_M=\{s\in S \mid
  \sum_{s'\in\semantics{f}_M}\pmat{T}(s,s')\sim p\}$.
  \item $\semantics{f_1\wedge f_2}_M=\semantics{f_1}_M\cap\semantics{f_2}_M$;
   $\semantics{f_1\vee f_2}_M=\semantics{f_1}_M\cup\semantics{f_2}_M$.
  \item $\semantics{f_1\opU^{\sim p}f_2}_M=\{s\in S\mid
  \prob_{M,s}\{\pi\in\cyl(s)\mid \pi\models f_1\opU f_2\}\sim p\}$
  and $\semantics{f_1\opR^{\sim p}f_2}_M=\{s\in S\mid
  \prob_{M,s}\{\pi\in\cyl(s)\mid \pi\models f_1\opR f_2\}\sim p\}$.
\end{itemize}
In addition, for an infinite path $\pi=s_0,s_1,\ldots$ of $M$, the notation
$\pi\models f_1\opU f_2$ stands for that there is some $i\geq 0$ such
that $s_i\in\semantics{f_2}_M$ and $s_j\in\semantics{f_1}_M$ for each $j<i$.
Meanwhile, $\pi\models f_1\opR f_2$ holds if either $\pi\models
f_2\opU(f_1\wedge f_2)$ or $s_j\in\semantics{f_2}_M$ for each $j$.
To simplify notations, in what follows we denote by $M,s\models f$ whenever
$s\in\semantics{f}_M$ holds.


\section{\pmutl, Syntax and Semantics}
\label{Sec: Logic}

In this section we present a simple probabilistic extension of modal $\mu$-calculus,
called \pmutl.
The syntax of \pmutl formulae is depicted as follows:
\begin{displaymath}
f \Coloneqq \top\mid \bot\mid a\mid\neg a \mid Z\mid \opX^{\sim p} f \mid f\wedge f \mid f \vee f \mid \mu Z. f \mid \nu Z. f
\end{displaymath}

Semantics of a \pmutl formula is given w.r.t. a Markov chain
$M=(S,\pmat{T},L)$ and an assignment $e:\calZ\to 2^S$.
Similarly, for each \pmutl formula $f$,
we denote by $\semantics{f}_M(e)$ the state set satisfying
$f$ under $e$. Inductively:
\begin{itemize}
\item $\semantics{\top}_M(e)=S$ and $\semantics{\bot}_M(e)=\emptyset$.
\item $\semantics{a}_M(e) = \{s\in S\mid a\in L(s)\}$ and $\semantics{\neg a}_M(e)=\{s\in S\mid a\not\in L(s)\}$.
\item $\semantics{Z}_M(e)=e(Z)$.
\item $\semantics{\opX^{\sim p}f}_M(e)=\{s\in S\mid \sum_{s'\in\semantics{f}_M(e)}\pmat{T}(s,s')\sim p\}$.
\item $\semantics{f_1\wedge f_2}_M(e)=\semantics{f_1}_M(e)\cap\semantics{f_2}_M(e)$\\
and $\semantics{f_1\vee f_2}_M(e)=\semantics{f_1}_M(e)\cup\semantics{f_2}_M(e)$.
\item $\semantics{\mu Z. f}_M(e)=\bigcap\{S'\subseteq S\mid \semantics{f}_M(e[Z\mapsto S'])\subseteq S'\}$
and	$\semantics{\nu Z. f}_M(e)=\bigcup\{S'\subseteq S\mid \semantics{f}_M(e[Z\mapsto S'])\supseteq S'\}$.
\end{itemize}

Indeed, $\semantics{\mu Z.f}_M(e)$ (resp. $\semantics{\nu Z.f}_M(e)$)
could be computed as in the classical setting via the following iteration:
\begin{enumerate}
\item let $S_0=\emptyset$ (resp. $S_0=S$);
\item subsequently, let $S_{i+1}=\semantics{f}_M(e[Z\mapsto S_i])$;
\item stops if $S_{\ell+1} = S_\ell$, and returns $S_\ell$.
\end{enumerate}
Note that the algorithm obtains a monotonic chain with such an iteration,
and hence it must terminate within finite steps.
Actually, $\semantics{\mu Z.f}_M(e)$ (resp. $\semantics{\nu Z.f}_M(e)$) captures
the least (resp. greatest) solution of $X=\semantics{f}_M(e[Z\mapsto X])$ within $2^S$.

Semantical definition of \pmutl formulae also
yields the model checking algorithm.

\begin{theorem}
 \label{Thm: Model-checking}
 The model checking problem of \pmutl is in \up$\cap$\coup.
\end{theorem}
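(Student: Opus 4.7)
The plan is to mimic the classical reduction from \mutl model checking to solving a parity game (Emerson--Jutla--Sistla) and then invoke Jurdzi\'nski's theorem that parity games lie in \up$\cap$\coup. The key observation is that, once the value $\semantics{g}_M$ of an immediate subformula $g$ is known, deciding whether a state $s$ satisfies $\opX^{\sim p}g$ is just a linear comparison of the probability mass that $\pmat{T}(s,\cdot)$ puts on $\semantics{g}_M$ against $p$. So \pmutl looks like \mutl equipped with a slightly richer but still polynomially-checkable modality, and the classical pipeline should go through with essentially only the modal gadget redesigned.

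Concretely, I would build a parity game $G_{M,f}$ whose positions are pairs $(s,\psi)$ with $s\in S$ and $\psi$ a subformula of $f$; disjunctions are existential choice points, conjunctions universal, fixpoint binders and variable occurrences are deterministic unfoldings, and priorities on $\mu/\nu$ positions are assigned according to alternation depth in the usual way. The new gadget is at $(s,\opX^{\sim p}g)$: existential first commits to a set $T\subseteq\{s'\mid\pmat{T}(s,s')>0\}$ satisfying $\sum_{s'\in T}\pmat{T}(s,s')\sim p$, claiming $T\subseteq\semantics{g}_M$; universal then picks some $s'\in T$ as a challenge and the play continues from $(s',g)$. Correctness, i.e. $M,s\models f$ iff existential wins from $(s,f)$, is established by structural induction on $\psi$, the only non-routine case being $\opX^{\sim p}g$: existential's optimal commitment is $T^{\ast}=\{s'\mid\pmat{T}(s,s')>0\text{ and }M,s'\models g\}$, whose probability weight equals $\sum_{s'\in\semantics{g}_M}\pmat{T}(s,s')$ by definition, matching the semantic clause of $\opX^{\sim p}g$.

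For the complexity bound, although a $(s,\opX^{\sim p}g)$ node has exponentially many outgoing edges in the naive representation, a \emph{positional} strategy for either player admits a polynomial-size description: at each existential position one records a disjunct, an unfolded subformula, or a bit-vector $T\subseteq S$, and symmetrically for universal. Given such a strategy, verifying that it wins reduces to checking the parity condition on a one-player game, which is doable in polynomial time. Guessing a canonical winning strategy---for instance the one induced by Jurdzi\'nski's small progress measures, which gives a unique witness---and then verifying it therefore places the problem in \up; a symmetric argument on the side of universal places it in \coup. The main obstacle I expect is confirming that the set-valued existential move in the $\opX^{\sim p}g$ gadget does not break the standard correspondence between positional winning strategies and least/greatest fixpoints, but the observation about the optimal $T^{\ast}$ above should reduce this to the classical \mutl argument almost verbatim.
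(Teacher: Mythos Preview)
Your high-level plan is exactly what the paper sketches: adapt the classical \mutl-to-parity-game reduction and invoke Jurdzi\'nski, with the only new ingredient being the handling of $\opX^{\sim p}$, which the paper merely notes ``could be proceeded in (deterministic) polynomial time.''

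There is, however, a genuine gap in your complexity argument. Your set-commitment gadget introduces intermediate universal positions $(s,T,g)$, one for every admissible subset $T$ of successors, so the resulting parity game has exponentially many \emph{positions}, not merely exponentially many edges out of a single node. Two things then break. First, Jurdzi\'nski's small-progress-measure witness is a function on the entire position set and is therefore itself exponential, so you cannot appeal to it for the uniqueness required by \up. Second, a positional strategy for universal must prescribe a move at every $(s,T,g)$, which is again exponential, so the ``symmetric argument'' for \coup does not go through as stated. The clean fix---and what the paper's one-line remark is pointing at---is to dispense with the intermediate $T$-positions and work directly with signatures (rank functions) on the polynomially many pairs $(s,\psi)$: at a node $(s,\opX^{\sim p}g)$ the local consistency condition is simply that $\sum_{s'}\pmat{T}(s,s')$, summed over those $s'$ carrying a finite signature at $(s',g)$, satisfies ${\sim}\,p$. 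This is a deterministic polynomial-time threshold check, the least consistent signature assignment is the unique polynomial witness for \up, and the dual construction gives \coup. Your optimal-choice observation $T^{\ast}=\{s':\pmat{T}(s,s')>0\text{ and }M,s'\models g\}$ is precisely what makes this local condition equivalent to the game condition, so the correctness part of your argument survives intact.
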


Indeed, the proof is analogous to the non-probabilistic version \cite{Jur98,Wil00}
and the only noteworthy difference lies from handling $\opX^{\sim p}$-
subformulae, opposing to $\Box$- and $\Diamond$- subformulae, which
could be proceeded in (deterministic) polynomial time.

In what follows, we directly denote by $\semantics{f}_M$ in the case that $f$ is a closed formula
(i.e., each variable of $f$ is bound),
 and we also denote by $M,s\models f$
if $s\in\semantics{f}_M$.

Below we give some example properties:
\begin{enumerate}[(1)]
    \item The formula $\nu Z.(a\wedge\opX^{> 0.8}Z)$ describes that there
  exists an $a$-region,
	where each state has less than $0.2$ probability to escape from it immediately
	(i.e., in one step).
    \item  $\nu Z.(a\wedge\opX^{>0}\opX^{>0}Z)$ says that there is a cycle in the Markov chain,
	such that $a$ holds at least in every even step.
    \item  $M,s\models \mu Z.(a\vee\opX^{\geq 0.6}Z)$ if some $a$-state is reachable
     from $s$, but at each step, one just has  some probability (not less than $0.6$) to go
     on with the right direction.
    \item
    The \pmutl formula  $\mu Z.(b\vee(a\wedge\opX^{\geq 1} Z))$ holds if $a\opU b$ holds along each path.
    It is  stronger than the  property described by the \pctl
    formula $a \opU^{\geq 1}b$. For the latter allows the existence of $a$-cycles.
    \item As  a more complicated example, the formula $\nu Z_1.(a\vee\mu Z_2.(a\vee\opX^{>0}Z_2)\wedge\opX^{\geq 1}Z_1)$
	just tells the story that ``$a$ will be surely encountered'',
        as described by $\opF^{\geq 1}a$ with \pctl.
 \end{enumerate}

Given a \pmutl formula $f$ and a bound variable $Z$, we use $\binder_f(Z)$ to
denote the subformula which binds $Z$ in $f$. For example, let $f=\mu
Z_1(a\wedge\nu Z_2.(b\wedge\opX^{>=0.3}Z_2)\vee\opX^{>0.6}Z_1)$, then
we have $\binder_f(Z_1)=f$ and $\binder_f(Z_2)= \nu
Z_2.(b\wedge\opX^{\geq 0.3}Z_2)$.

We say that a \pmutl formula $f$ is \emph{guarded}, if
the occurrence of each bound variable $Z$ in $\binder_f(Z)$ is in
the scope of some $\opX$-operator.
The following theorem could be proven in a same manner as that in \cite{Wal00}.

\begin{theorem}
\label{Thm: Guarded}
For each \pmutl formula $f$, there is a guarded  formula $f'$
such that $\semantics{f'}_M(e)=\semantics{f}_M(e)$ for every $M$ and $e$.
\end{theorem}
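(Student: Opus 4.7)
The plan is to adapt Walukiewicz's guardedness transformation for classical \mutl~\cite{Wal00} to the probabilistic setting. The only semantic novelty in \pmutl is the modality $\opX^{\sim p}$, and crucially it still induces a \emph{monotone} map $2^S\to 2^S$ for every Markov chain $M$, every $\sim\in\{>,\geq\}$ and every $p\in[0,1]$: enlarging the argument set can only enlarge the set of states whose one-step probability mass to it meets the bound $\sim p$. Consequently every fixpoint identity Walukiewicz relies on — monotonicity, Knaster--Tarski, unfolding, and especially Bekic's lemma — remains valid when $\opX^{\sim p}$ plays the role that $\Box$ and $\Diamond$ play in the classical argument.

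Concretely, I would proceed by well-founded induction on the number of \emph{bad binders} of $f$, where a bad binder is a subformula $\sigma Z.g$ (with $\sigma\in\{\mu,\nu\}$) in which some occurrence of $Z$ in $g$ is not in the scope of any $\opX^{\sim p}$. The Boolean, variable and modal cases preserve guardedness trivially, so the only real step is the elimination of a single innermost bad binder $\sigma Z.g(Z)$. Pick a fresh $Y\in\calZ$ and let $h(Y,Z)$ be obtained from $g$ by replacing every unguarded occurrence of $Z$ by $Y$; by construction, $Y$ appears in $h$ only unguardedly and $Z$ only guardedly. Invoking Bekic's lemma for the pair of monotone operators induced by $h$, one obtains
\[
\semantics{\sigma Z.g(Z)}_M(e) \;=\; \semantics{\sigma Z.\,\sigma Y.\,h(Y,Z)}_M(e)
\]
for every $M$ and every environment $e$. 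It then suffices to eliminate the inner $\sigma Y.h(Y,Z)$. Because every occurrence of $Y$ sits in purely Boolean context, standard distributivity and absorption rewrite $h(Y,Z)$ into a syntactic form that isolates $Y$-free subformulas from $Y$, after which $\sigma Y.h(Y,Z)$ collapses to a closed-form, $Y$-fixpoint-free formula $g'(Z)$; the $\opX^{\sim p}$-subformulas (the only places where $Z$ lives in $h$) are never touched, so $Z$ remains guarded in $g'$. Hence $\sigma Z.g'(Z)$ is no longer a bad binder and one step of the induction is complete.

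The main obstacle is the closed-form elimination of the inner $\sigma Y.h(Y,Z)$: one has to verify, separately for $\sigma=\mu$ and $\sigma=\nu$ and separately for each Boolean normal form, both that the rewriting preserves the fixpoint semantics and — more delicately — that it does not inadvertently expose any previously guarded variable (whether $Z$ itself or variables bound deeper inside $g$) to a new unguarded position. Only once this is established does the number of bad binders strictly decrease at each induction step, making the recursion well-founded. Everything else is essentially formulation-for-formulation identical to \cite{Wal00}; the real content here is the verification that $\opX^{\sim p}$ behaves, for these purely algebraic purposes, exactly like the classical next-step modality.
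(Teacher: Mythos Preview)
Your proposal is correct and is precisely the approach the paper intends: the paper gives no proof at all, merely remarking that the theorem ``could be proven in a same manner as that in \cite{Wal00}'', and your write-up is exactly such an adaptation, with the key (and only) additional observation being the monotonicity of $\opX^{\sim p}$ that makes Walukiewicz's fixpoint manipulations go through unchanged. If anything, you have supplied considerably more detail than the paper does.
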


Thus, in what follows, we always assume that each
\pmutl formula is guarded.


\section{Expressiveness}

In this section, we will give a comparison between \pmutl and \pctl,
and we are only concerned about closed \pmutl formulae.
For a  \pmutl formula $f$ and a \pctl formula $g$, we say that
$f$ and $g$ are \emph{equivalent} if $\semantics{f}_M = \semantics{g}_M$
for every Markov chain $M$, denoted as $f\equiv g$.

First of all,  we will show that some \pmutl formula
could not be equivalently expressed by any \pctl formula.

\begin{theorem}
\label{Thm: PmuTL-beyond}
Let $f=\nu Z.(a\wedge\opX^{\geq 0.5} Z)$,
then $g\not\equiv f$ for every \pctl formula $g$.
\end{theorem}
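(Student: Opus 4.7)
The plan is to exhibit a family of Markov chains $\{M_n\}_{n\ge 1}$ with two distinguished states $s_0$ and $s_1$ that are always separated by $f$ but that become indistinguishable by any fixed \pctl formula $g$ once $n$ is chosen large enough relative to $g$; this refutes $g\equiv f$ for any candidate $g$.

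Take $M_n$ with state set $\{s_0,s_1,\ldots,s_n,t\}$, labels $L(s_i)=\{a\}$ for $0\le i\le n$ and $L(t)=\emptyset$, and transitions $\pmat{T}(s_0,s_0)=\pmat{T}(s_0,s_1)=\nicefrac{1}{2}$, $\pmat{T}(s_i,s_{i+1})=1$ for $1\le i<n$, and $\pmat{T}(s_n,t)=\pmat{T}(t,t)=1$. The singleton $\{s_0\}$ witnesses the $\nu$-fixpoint defining $f$, since $\pmat{T}(s_0,s_0)=\nicefrac{1}{2}$; hence $M_n,s_0\models f$. Conversely, any fixpoint subset containing $s_1$ would be forced to contain its unique successor $s_2$, then $s_3$, and so on up to $s_n$; but $\pmat{T}(s_n,t)=1$ with $t\notin\semantics{a}_{M_n}$ forces $s_n$ out of the fixpoint, cascading back through the chain to rule out $s_1$, so $M_n,s_1\not\models f$.

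The key lemma: let $k$ denote the maximum nesting depth of $\opX^{\sim p}$-operators inside a \pctl formula $g$ ($\opU^{\sim p}$ and $\opR^{\sim p}$ do not count toward this depth); for every $n\ge k+1$, the states $s_0,s_1,\ldots,s_{n-k}$ are homogeneous with respect to satisfaction of $g$ in $M_n$. I prove this by structural induction. Atomic and Boolean cases are immediate. For $g=\opX^{\sim p}h$, the induction hypothesis yields uniformity of $\semantics{h}_{M_n}$ on the one-larger zone $\{s_0,\ldots,s_{n-k+1}\}$; since every successor of a state in $\{s_0,\ldots,s_{n-k}\}$ lies inside that larger zone, $\pmat{T}(u,\semantics{h}_{M_n})$ is the same constant across such $u$. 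For $g=h_1\opU^{\sim p}h_2$ (and symmetrically for $\opR^{\sim p}$), case-split on the uniform behaviour of $h_1,h_2$ on the zone: if $h_2$ holds there, every $\Pr_u[h_1\opU h_2]=1$; if neither $h_1$ nor $h_2$ holds, every $\Pr_u[h_1\opU h_2]=0$; otherwise $h_1\wedge\neg h_2$ holds on the zone, and the self-loop recurrence at $s_0$ gives $\Pr_{s_0}=\nicefrac{1}{2}\Pr_{s_0}+\nicefrac{1}{2}\Pr_{s_1}$ (so $\Pr_{s_0}=\Pr_{s_1}$), while the deterministic transitions chain $\Pr_{s_i}=\Pr_{s_{i+1}}$ throughout the zone.

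Suppose for contradiction that $g\equiv f$ for some \pctl formula $g$, and pick $n\ge k+1$ with $k$ the $\opX$-nesting depth of $g$; the lemma forces $s_0$ and $s_1$ to agree on $g$, contradicting $M_n,s_0\models f$ and $M_n,s_1\not\models f$. The delicate point is the $\opU/\opR$ case of the induction: these operators express probabilities of infinite paths that in principle could depend on the chain arbitrarily far beyond the agreement zone. The argument succeeds because $M_n$ is essentially deterministic past $s_0$, so the relevant recurrence telescopes into a chain of equalities; the sole probabilistic branching at $s_0$ is then resolved by the self-loop equation without breaking the $s_0$--$s_1$ symmetry.
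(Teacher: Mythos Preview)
Your proof is correct and takes a genuinely different route from the paper's. The paper builds \emph{two} families of chains, $M_n$ (a deterministic ``chain into a sink'') and $M'_n$ (the same with a $\nicefrac{1}{2}$ self-loop added at the top), observes that $f$ separates $s_n$ from $s'_n$ for every $n$, and then invokes a result from Baier--Katoen (Thm.~10.45) stating that two such pointed chains are \pctl-indistinguishable once $n$ exceeds the total temporal nesting depth $N(g)$; no inductive lemma is proved in the paper itself. You instead place both witnesses inside a \emph{single} chain $M_n$, distinguish $s_0$ from $s_1$ by $f$, and prove from scratch an indistinguishability lemma keyed to the $\opX$-nesting depth only (with $\opU^{\sim p},\opR^{\sim p}$ handled by the recurrence $\Pr_{s_0}=\nicefrac{1}{2}\Pr_{s_0}+\nicefrac{1}{2}\Pr_{s_1}$ and the deterministic chain of equalities). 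Your argument is more self-contained and yields a sharper parameter (only the $\opX$-depth of $g$ matters, not the full temporal depth), at the cost of having to carry out the $\opU/\opR$ case analysis explicitly; the paper's version is shorter but leans on an external citation.
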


\begin{proof}

To show this, we need first construct two families of Markov chains, namely,
$M_0,M_1,\ldots,$ and $M'_0,M'_1,M'_2,\ldots$.
\par
For the first group, let $M_n=(\{s_0,s_1,\ldots,s_n\},\pmat{T}_n,L_n)$, where:
$\pmat{T}_n(s_0,s_0)=1$ and $\pmat{T}_n(s_{i+1},s_i)=1$ for each $i<n$
(hence $\pmat{T}_n(s_i,s_j)=0$ for any other $s_i$, $s_j$).
In addition, $L_n(s_0)=\emptyset$ and $L_n(s_i)=\{a\}$ for each
$0<i\leq n$.

 For the second ones, let $M'_n=(\{s'_0,s'_1,\ldots,s'_n\},\pmat{T}'_n,L'_n)$
where:
$\pmat{T}'_n(s'_n,s'_n)=\pmat{T}'_n(s'_n,s'_{n-1})=0.5$,
$\pmat{T}'_n(s_0,s_0)=1$, and $\pmat{T}'_n(s'_{i+1},s'_i)=1$ for every $i<n-1$.
In addition, $L'_n(s'_0)=\emptyset$ and $L'_n(s'_i)=\{a\}$ for each
$0<i\le n$.

Given a \pctl formula $g$, let $N(g)$ be the
maximal nesting depth of temporal-operators of $g$.
According to  \cite[Thm. 10.45]{BK08}, we have that
$M'_n,s'_n\models g$ if and only if $M_n,s_n\models g$
whenever $n\geq N(g)$.

Observe the fact that $M'_n,s'_n\models f$ and $M_n,s_n\not\models f$
for every $n\geq 1$.
Assume that there exists some \pctl formula $g$ fulfilling $f\equiv g$,
then we have
$$\begin{array}{rlcl}
 & M'_{N(g)},s'_{N(g)}\models f & \Longleftrightarrow &
 M'_{N(g)},s'_{N(g)}\models g \\
 \Longleftrightarrow &
 M_{N(g)},s_{N(g)}\models g & \Longleftrightarrow & M_{N(g)},s_{N(g)}\models f
\end{array}$$
and hence it results in a contradiction.
\end{proof}



Conversely, the following theorem reveals that there also exists
some \pctl formula that could not be equivalently expressed by any \pmutl
formula.

\begin{theorem}
\label{Thm: PCTL-beyond}
Let $f =\opF^{\geq 0.5}a$, then $g\not\equiv f$ for every (closed) \pmutl
formula $g$.
\end{theorem}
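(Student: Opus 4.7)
The plan is to prove the theorem by a perturbation argument that exploits the fact that any given \pmutl formula $g$ uses only finitely many probability constants. Assume for contradiction that $g \equiv f$, and let $P \subseteq [0, 1]$ denote the finite set of values $p$ occurring in $\opX^{\sim p}$-subformulas of $g$. I first establish a threshold-invariance lemma: if $M = (S, \pmat{T}, L)$ and $M' = (S, \pmat{T}', L)$ share the same state space and labeling, and if for every $s \in S$, every $X \subseteq S$, every $p \in P$, and every $\sim \in \{\geq, >\}$ one has $\sum_{s' \in X} \pmat{T}(s, s') \sim p$ iff $\sum_{s' \in X} \pmat{T}'(s, s') \sim p$, then $\semantics{g}_M(e) = \semantics{g}_{M'}(e)$ for every environment $e$. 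This follows by induction on $g$: atoms, variables, and Boolean cases are immediate; the fixpoint cases carry invariance through their monotone iterates, since the inductive hypothesis applies uniformly for every guess of the bound variable; and the only interesting case is $\opX^{\sim p} h$, where the induction hypothesis gives $\semantics{h}_M(e) = \semantics{h}_{M'}(e)$ and the threshold-preservation hypothesis applied to that common subset with constant $p \in P$ closes the argument.

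Next I construct the witness chain $M$ on states $\{s_0, s_1, s_2, t, u\}$ with $L(t) = \{a\}$ and the other labels empty, $t$ and $u$ self-looping, $\pmat{T}(s_0, s_1) = q$, $\pmat{T}(s_0, s_2) = 1 - q$, and $\pmat{T}(s_i, t) = p_i$, $\pmat{T}(s_i, u) = 1 - p_i$ for $i \in \{1, 2\}$. Since $P$ is finite and the equation $q p_1 + (1 - q) p_2 = 1/2$ cuts out a two-dimensional surface in $(0, 1)^3$, I can pick $q, p_1, p_2$ on this surface so that none of the six values $q, 1 - q, p_1, 1 - p_1, p_2, 1 - p_2$ lies in $P$. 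The reachability probability to an $a$-state from $s_0$ is then exactly $1/2$, so $M, s_0 \models f$. I then obtain $M'$ by replacing $p_1$ with $p_1 - \eta$ for some $\eta > 0$ small enough that $p_1 - \eta$ and $1 - p_1 + \eta$ both avoid $P$ and remain on the same side of every element of $P$ as $p_1$ and $1 - p_1$ respectively. The new reachability is $1/2 - q\eta < 1/2$, so $M', s_0 \not\models f$.

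Finally, the only transitions differing between $M$ and $M'$ are those out of $s_1$, and the only subset sums out of $s_1$ that change are $p_1$ and $1 - p_1$, which by the choice of $\eta$ stay on the same side of every $p \in P$. The threshold-invariance lemma therefore yields $\semantics{g}_M = \semantics{g}_{M'}$, while $g \equiv f$ would force $s_0 \in \semantics{g}_M \setminus \semantics{g}_{M'}$: a contradiction. The delicate step is the parameter selection, where I must simultaneously enforce the reachability equation and the generic-position condition with respect to $P$; it works cleanly with the small chain above because $P$ removes only a one-dimensional set from the two-dimensional solution surface, and I expect this dimension-counting to be the only place where real care is needed.
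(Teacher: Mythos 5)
Your proof is correct, but it takes a genuinely different route from the paper's. The paper fixes one $3$-state family $M$ with parameters $x,y,z$ and proves, by structural induction on $g$ (unfolding each fixpoint three times, which is legitimate only because the chain has three states), that for every closed \pmutl formula $g$ the truth value of $g$ at $s_1$ eventually stops oscillating in $y$ once $x$ is close enough to $1$, whereas $\opF^{\geq 0.5}a$ oscillates in $y$ for every $x$. You instead isolate the real reason behind the phenomenon as a reusable invariance lemma: $g$ can only compare one-step subset sums $\sum_{s'\in X}\pmat{T}(s,s')$ against the finitely many constants in $P$, so any perturbation of $\pmat{T}$ that crosses none of these thresholds leaves $\semantics{g}$ unchanged; the lemma's induction is clean because the hypothesis is quantified over all environments, which handles the fixpoint cases directly without any unfolding. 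Your parameter selection and the choice of $\eta$ are sound (the key points being $p_1,1-p_1\notin P$ to rule out the $\geq$/$>$ boundary cases, and $\eta$ smaller than the distance from $p_1$ and $1-p_1$ to $P$), and the perturbed chain drops the reachability probability strictly below $\nicefrac{1}{2}$ while remaining indistinguishable to $g$. Your approach buys more: it immediately yields the paper's closing remark that $0.5$ can be replaced by any $p\in(0,1)$, and more generally it shows that \pmutl is insensitive to sufficiently small threshold-respecting perturbations of the transition matrix, while exact global reachability constraints are not; the paper's argument, by contrast, is tailored to its specific family and requires a somewhat delicate case analysis for $\opX^{\geq 1}$ and $\opX^{>0}$.
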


\begin{proof}
 Let $M=(\{s_1,s_2,s_3\},\pmat{T}, L)$ be the (family of) Markov chain(s) where:
 $L(s_1)=L(s_2)=\emptyset$, $L(s_3)=\{a\}$, $\pmat{T}(s_1,s_1)=x,
 \pmat{T}(s_1,s_2)=y, \pmat{T}(s_1,s_3)=z$, and $\pmat{T}(s_2,s_2)=\pmat{T}(s_3,s_3)=1$,
 with $x,y,z\in(0,1)$ and $x+y+z=1$.
 \par For every \pctl and/or closed \pmutl formula $g$, we let $P_x(g)$ be the
 proposition that ``for the fixed $x$, there are infinitely many $y$ making $M,s_1\models g$ and
 there are infinitely many $y$
 making $M,s_1\not\models g$''.
 We now show that if $g$ is a closed \pmutl formula, then there exists some
 $x_g<1$ such that $P_x(g)$ does not hold whenever $x\in(x_g,1)$.
\begin{itemize}
\item Such $x_g$ can be arbitrarily chosen if $g=\bot$, $g=\top$, $g=a$ or $g=\neg a$.
\item In the case that $g=g_1\wedge g_2$, assume by contradiction that such $x_g$ does not exist,
	then it implies that for every $x\in(0,1)$, there exists some $x'>x$ such that $P_{x'}(g)$ holds.
	Observe that $M,s_1\models g$ implies both $M,s_1\models g_1$ and $M,s_1\models g_2$;
	and $M,s_1\not\models g$ implies either $M,s_1\not\models g_1$ or $M,s_1\not\models g_2$.
	Thus, we can infer that either $x_{g_1}$ or $x_{g_2}$ does not
        exist, which violates the induction hypothesis.
\item Proof for the case of $g=g_1\vee g_2$ is similar to the above.
\item If $g=\opX^{\sim p} g'$ and $p\in(0,1)$, whenever $x\in (\max\{p,1-p\},1)$, since $\sim\in\{>,\geq\}$,
	then $M,s_1\models g$ iff $M,s_1\models g'$ because $y+z<p$ in such situation.
	In this case, we may just let $x_g=\max\{x_{g'}, p,1-p\}$.
\item If $g=\opX^{\geq 1} g'$, then we need to distinguish two cases: 1) There exist $x,y\in(0,1)$ such that
	$M,s_1\models g$ holds, then we can immediately infer that both $M,s_2\models g'$ and $M,s_3\models g'$.
	In addition, observe that truth values of $g'$ on $s_2$ and $s_3$ are irrelevant  to $x$ and $y$.
	It implies that in such case $M,s_1\models g$ iff $M,s_1\models g'$, and hence, we may just let $x_g=x_{g'}$.
	2) There is no such $x$ and $y$ having $M,s_1\models g$ holds, in such situation, $x_g$ can be any
	number in $(0,1)$.
\item If $g=\opX^{>0}g'$, then the proof is similar to the above.
\item When $g=\opX^{\geq 0}g'$ (or $g=\opX^{>1}g'$), things would be trivial, because $g$ could be reduced to
	$\top$ (resp. $\bot$) in such case.
\item  If $g=\mu Z. g'$, we let $g_0=\bot$ and $g_{i+1}=g'[Z/g_i]$. Since that $M$ is a $3$-state Markov chain,
	then $g$ and $\bigvee_{i\leq 3}g_i$ share the same truth value at every state of $M$.
	This indicates that all least fix-points could be eliminated w.r.t. such Markov chain.
\item When $g=\nu Z.g'$, the preprocessing is almost similar, but we just replace $g$ with $\bigwedge_{i\leq 3}g_i$
	where $g_0=\top$.
\end{itemize}
Now, for the \pctl formula $f=\opF^{\geq 0.5}a$, such $x_f$ does not exist, because, for every $x\in(0,1)$ we have:
$M,s_1\models f$ provided that $y\in [(1-x)/2,1)$; and $M,s_1\not\models f$ if $y\in(0,(1-x)/2)$.
This implies that $P_x(f)$ holds for every $x\in(0,1)$,
and hence $f$ cannot be equally expressed by any \pmutl formula.
\end{proof}

Note that the value $0.5$ in the previous two theorems
can  be generalized to any other probability
$p\in(0,1)$.

We also provide a comparison on the \emph{qualitative} fragments of
\pctl and \pmutl. Probabilities occurring in such fragments
can only be $0$ or $1$.

\begin{theorem}
\label{Thm: Quali-Containing}
Every qualitative \pctl formula can be equally expressed by a qualitative \pmutl
formula.
\end{theorem}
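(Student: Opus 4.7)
The plan is to prove the claim by structural induction on the qualitative PCTL formula $g$, exhibiting a compositional translation $\tau(\cdot)$ into \pmutl with $\semantics{\tau(g)}_M = \semantics{g}_M$ for every Markov chain $M$. The literal, constant and boolean cases are handled verbatim; the probabilistic-next cases are handled by direct use of \pmutl's primitive $\opX^{\sim p}$ (with $\opX^{\geq 0}$ and $\opX^{>1}$ collapsing to $\top$ and $\bot$). The real work lies in the four temporal modalities $\opU^{>0}$, $\opU^{\geq 1}$, $\opR^{>0}$, $\opR^{\geq 1}$.

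For the ``positive-probability'' until I would use the classical least-fixpoint encoding
\[
\tau(g_1\opU^{>0}g_2)\;=\;\mu Z.\,\tau(g_2)\vee(\tau(g_1)\wedge\opX^{>0}Z),
\]
whose denotation on a finite chain coincides with the set of states from which some positive-probability $\tau(g_1)$-path reaches a $\tau(g_2)$-state. Dually, for almost-sure release I would set
\[
\tau(g_1\opR^{\geq 1}g_2)\;=\;\nu Z.\,\tau(g_2)\wedge(\tau(g_1)\vee\opX^{\geq 1}Z);
\]
every element of this greatest fixpoint either already satisfies $\tau(g_1)\wedge\tau(g_2)$ or keeps all positive-probability successors inside the fixpoint, so almost every path stays in $\tau(g_2)$ until $\tau(g_1)\wedge\tau(g_2)$ (or forever).

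The key case, as the paper already warns in example~(4), is almost-sure until: the naive $\mu Z.\,\tau(g_2)\vee(\tau(g_1)\wedge\opX^{\geq 1}Z)$ is strictly stronger than $g_1\opU^{\geq 1}g_2$. Mirroring the pattern of example~(5) I would use the nested fixpoint
\[
\tau(g_1\opU^{\geq 1}g_2)\;=\;\nu Z_1.\,\bigl(\mu Z_2.\,\tau(g_2)\vee(\tau(g_1)\wedge\opX^{>0}Z_2)\bigr)\wedge\bigl(\tau(g_2)\vee(\tau(g_1)\wedge\opX^{\geq 1}Z_1)\bigr).
\]
The inner $\mu$-conjunct rules out $\tau(g_1)$-trap states from which $\tau(g_2)$ has become unreachable, while the outer $\nu$ propagates this invariant along almost all one-step successors. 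For $\opR^{>0}$ I would dualize via $\phi\opR\psi\equiv\neg(\neg\phi\opU\neg\psi)$: pushing negation down through \pmutl's duality laws ($\neg\mu Z.f\equiv\nu Z.\neg f[Z/\neg Z]$, $\neg\opX^{>0}f\equiv\opX^{\geq 1}\neg f$, and vice versa) converts the $\opU^{\geq 1}$ encoding into a positive-normal-form \pmutl formula.

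The hardest step will be verifying correctness of the $\opU^{\geq 1}$ clause. One direction is routine: $W:=\{s:\prob_{M,s}(g_1\opU g_2)=1\}$ is checked to satisfy both conjuncts of the $Z_1$-body, hence $W$ is contained in the nested fixpoint. For the converse, given $s$ in the nested fixpoint, the $\mu Z_2$-conjunct yields a $\tau(g_1)$-path of length at most $|S|$ that stays in the fixpoint and reaches $\tau(g_2)$; such a path has probability at least $p_{\min}^{|S|}$, where $p_{\min}$ is the smallest positive transition probability of $M$. Because the $\opX^{\geq 1}$-conjunct forces trajectories to remain in the fixpoint with probability $1$, iterating this bound gives $\prob_{M,s}(g_1\opU g_2)\geq 1-(1-p_{\min}^{|S|})^k$ for every $k$, hence equal to $1$. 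This is the probabilistic analogue of the classical fairness argument that makes CTL's $\quanA\opF$-operator expressible in the $\mu$-calculus.
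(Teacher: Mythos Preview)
Your translation is correct and is essentially the paper's own proof: identical base/boolean/$\opX$ clauses, the same $\mu$- and $\nu$-encodings for $\opU^{>0}$ and $\opR^{\geq 1}$, and the same $\nu$-over-$\mu$ scheme for $\opU^{\geq 1}$ (the paper uses the slightly weaker inner guard $\mu Z'.\bigl(\widetilde{g_2}\vee\opX^{>0}Z'\bigr)$ in place of your $\mu Z_2.\bigl(\tau(g_2)\vee(\tau(g_1)\wedge\opX^{>0}Z_2)\bigr)$, but the two greatest fixpoints coincide because the $\opX^{\geq 1}Z_1$ conjunct already forces every non-$g_2$ state in the fixpoint to lie in $\tau(g_1)$). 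The paper writes the $\opR^{>0}$ clause out directly as $\mu Z.\bigl(\widetilde{g_2}\wedge(\widetilde{g_1}\vee\nu Z'.(\widetilde{g_2}\wedge\opX^{\geq 1}Z')\vee\opX^{>0}Z)\bigr)$ rather than obtaining it by dualising $\opU^{\geq 1}$, and it leaves the correctness of the $\opU^{\geq 1}$ clause to the reader; your $p_{\min}^{|S|}$ fairness argument is precisely what is needed to fill that gap.
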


\begin{proof}
We will give a constructive translation procedure,
which takes a qualitative \pctl formula $g$ and outputs
 an equivalent qualitative \pmutl formula $\widetilde{g}$.
Inductively:

\begin{enumerate}
\item $\widetilde{g}=\bot$ if $g=\bot$, or its root operator  is
$\opX^{>1}$, $\opU^{>1}$ or $\opR^{>1}$;\\
 $\widetilde{g}=\top$ if $g=\top$, or its root operator  is
$\opX^{\geq 0}$, $\opU^{\geq 0}$ or $\opR^{\geq 0}$.
\item $\widetilde{g}=\widetilde{g_1}\wedge\widetilde{g_2}$ if $g=g_1\wedge g_2$;
and $\widetilde{g}=\widetilde{g_1}\vee\widetilde{g_2}$ if $g=g_1\vee g_2$.
\item $\widetilde{g}=\opX^{>0}\widetilde{g'}$ if $g=\opX^{>0}g'$;
	and $\widetilde{g}=\opX^{\geq 1}\widetilde{g'}$ if $g=\opX^{\geq 1}g'$.
\item $\widetilde{g}=\mu
Z.(\widetilde{g_2}\vee(\widetilde{g_1}\wedge\opX^{>0}Z))$ if
$g=g_1\opU^{>0}g_2$; \\ and $\widetilde{g}=\nu
Z.(\widetilde{g_2}\wedge(\widetilde{g_1}\vee\opX^{\geq 1}Z))$ if
$g=g_1\opR^{\geq 1}g_2$.
\item $\widetilde{g}=\nu
Z.(\widetilde{g_2}\vee(\widetilde{g_1}\wedge\widetilde{\opF^{>0}g_2}\wedge\opX^{\geq
1}Z)) = \nu Z.(\widetilde{g_2}\vee(\widetilde{g_1}\wedge\mu
Z'.(\widetilde{g_2}\vee\opX^{>0}Z')\wedge\opX^{\geq 1}Z))$ if $g=g_1\opU^{\geq
1}g_2$; \\
and $\widetilde{g}=\mu
Z.(\widetilde{g_2}\wedge(\widetilde{g_1}\vee\widetilde{\opG^{\geq
1}g_2}\vee\opX^{>0}Z)) = \mu Z.(\widetilde{g_2}\wedge(\widetilde{g_1}\vee\nu
Z'.(\widetilde{g_2}\wedge\opX^{\geq1}Z')\vee\opX^{>0}Z))$ if
$g=g_1\opR^{>0}g_2$.
\end{enumerate}
The proof of equivalence could be done by induction on the structure
of the formula.
\end{proof}

Note that Thm. \ref{Thm: Quali-Containing} holds  because we are only
concerned about finite models in this paper.
Interested readers may show that it is not true for infinite Markov chains.

\begin{theorem}
\label{Thm: Quali-Beyond}
The qualitative \pmutl formula $f=\nu Z.(a\wedge\opX^{>0}\opX^{>0}Z)$ cannot
be expressed in qualitative \pctl.
\end{theorem}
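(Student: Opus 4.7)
My plan is to recycle the two Markov chain families $\{M_n\}_{n\geq 1}$ and $\{M'_n\}_{n\geq 1}$ constructed in the proof of Theorem~\ref{Thm: PmuTL-beyond}, and to argue that already the formula $f = \nu Z.(a \wedge \opX^{>0}\opX^{>0} Z)$ separates the pointed chains $(M_n, s_n)$ and $(M'_n, s'_n)$ for every $n \geq 1$. Once the separation is in place, the rest of the argument is lifted verbatim from the proof of Theorem~\ref{Thm: PmuTL-beyond}: since qualitative \pctl is a syntactic fragment of \pctl, \cite[Thm.~10.45]{BK08} still applies, so a hypothetical qualitative \pctl equivalent $g$ of $f$ would force $s_n$ and $s'_n$ to agree on $g$ as soon as $n \geq N(g)$, yielding the required contradiction.

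For the separation I would compute the greatest fixpoint of $P(Z) = a \wedge \opX^{>0}\opX^{>0} Z$ directly on each chain. On $M'_n$ the self-loop at $s'_n$ furnishes a positive-probability two-step transition $s'_n \to s'_n \to s'_n$; together with $a \in L(s'_n)$ this shows that the singleton $X = \{s'_n\}$ satisfies $X \subseteq P(X)$, whence $s'_n \in \semantics{f}_{M'_n}$. On $M_n$ the chain is deterministic and strictly progresses toward the absorbing $\neg a$-state $s_0$; the unique two-step successor of $s_i$ (for $i \geq 2$) is $s_{i-2}$, so a quick induction shows that the $k$-th greatest-fixpoint approximant equals $\{s_{2k-1}, \ldots, s_n\}$, which eventually collapses to $\emptyset$. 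Hence $s_n \notin \semantics{f}_{M_n}$.

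The only thing that really needs checking is that replacing the quantitative operator $\opX^{\geq 0.5}$ used in Theorem~\ref{Thm: PmuTL-beyond} by the qualitative two-step operator $\opX^{>0}\opX^{>0}$ preserves the distinguishing behaviour on these specific chains. This boils down to the simple graph-theoretic observation that $M'_n$ contains a two-step cycle inside its $a$-region (namely the double self-loop at $s'_n$) whereas $M_n$ contains no cycle at all among its $a$-labelled states; no other aspect of the proof of Theorem~\ref{Thm: PmuTL-beyond} needs to be touched.
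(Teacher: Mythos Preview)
Your argument is correct, but it follows a genuinely different route from the paper's own proof.

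The paper does \emph{not} recycle the families $M_n$, $M'_n$ from Theorem~\ref{Thm: PmuTL-beyond}. Instead it introduces a fresh family $M''_n$ of \emph{fully deterministic} chains (every transition has probability~$1$): a line $s''_n\to s''_{n-1}\to\cdots\to s''_0$ with a self-loop at $s''_0$, where every state is labelled $a$ \emph{except} $s''_1$. On such chains any \pctl formula $g$ collapses to the \ltl formula $\hat g$ obtained by erasing the probability superscripts, and the paper then invokes Wolper's classical result on $\opX$-nesting depth to conclude that $M''_n,s''_n$ and $M''_{n+1},s''_{n+1}$ agree on $g$ once $n$ is large. The separating phenomenon is the \emph{parity} of $n$: the unique two-step trajectory from $s''_n$ either hits the $\neg a$-state $s''_1$ or skips over it and stabilises at the $a$-state $s''_0$, so $M''_n,s''_n\models f$ iff $n$ is even.

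Your approach instead reuses the indistinguishability result already established for Theorem~\ref{Thm: PmuTL-beyond} via \cite[Thm.~10.45]{BK08}, and your separating phenomenon is the presence (in $M'_n$) versus absence (in $M_n$) of a cycle inside the $a$-region. This is more economical: nothing new has to be built, and the verification that $\{s'_n\}$ is a post-fixpoint of $Z\mapsto a\wedge\opX^{>0}\opX^{>0}Z$ while the greatest fixpoint in $M_n$ collapses to $\emptyset$ is immediate. Both arguments in fact yield the stronger conclusion that $f$ is inexpressible in \emph{full} \pctl, not only its qualitative fragment (the paper remarks on this at the end of its proof; in your version it is automatic because \cite[Thm.~10.45]{BK08} concerns arbitrary \pctl formulae). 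What the paper's detour through deterministic chains buys is self-containment: it avoids the black-box appeal to probabilistic bisimulation depth in \cite{BK08} and rests only on Wolper's elementary \ltl lemma.
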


\begin{proof}
Construct a series of Markov chains $M''_2, M''_3,\ldots$ such that each
$M''_n$ is the Markov chain $(\{s''_{0},s''_1,\ldots,s''_n\},\pmat{T}''_n,L''_n)$, where
$\pmat{T}''_n(s''_0,s''_0)=1$  and  $\pmat{T}''_n(s''_{i+1},s''_i)=1$ for each $i<n$.
In addition, $L''_n(s''_i)=\{a\}$ for each $i\neq 1$, and $L''_n(s''_1)=\emptyset$.

For a given \pctl formula $g$, let $\hat{g}$ be the \ltl formula obtained from
$g$ by discarding all probability quantifiers, e.g.,
we have $\hat{g}=a\opU(b\vee\opG\neg a)$ if $g=a\opU^{\geq 0.3}(b\vee\opG^{>0.6}\neg a)$.
Since that from $s''_n$ the Markov chain $M''_n$
has exactly one infinite path $\pi_n=s''_n,\ldots,s''_1,(s''_0)^\omega$,
then for each $n\geq 2$ we have $M''_n,s''_n\models g$ if and only if
$\pi_n\models\hat{g}$.
It is shown in \cite{Wol83} that $M''_n,s''_n\models\hat{g}$ iff
$M''_{n+1},s''_{n+1}\models\hat{g}$ in the case of
$n\geq N'(\hat{g})=N'(g)$, where $N'(g)$ and $N'(\hat{g})$ are the nesting depth
of $\opX$-operator of $g$ and $\hat{g}$, respectively.
Thus, we have $M''_n,s''_n\models g$ iff
$M''_{n+1},s''_{n+1}\models g$ in such situation.
This implies that $\nu Z.(a\wedge\opX^{>0}\opX^{>0}Z)$ has no equivalent
qualitative \pctl expression, because we cannot simultaneously have
$M''_n,s''_n\models f$ and $M''_{n+1},s''_{n+1}\models f$ for each $n\geq 2$.
\end{proof}

Note that the conclusion of Thm. \ref{Thm: Quali-Beyond}
is also pointed out in \cite{CIN05}, and we here provide a detailed proof.
Indeed, this proof also works for general \pctl formulae, and hence
the property $\nu Z.(a\wedge\opX^{>0}\opX^{>0}Z)$ even cannot be expressed by any \pctl
 formula.


\section{Automata Characterization}
\label{Sec: Automata}

In this section, we will define a new type of automata recognizing
(pointed) Markov chains,
called \emph{probabilistic alternating parity automata} (\papa, for short),
and such automata could be viewed as the probabilistic extension of
those defined in \cite{Wil00}.

A \papa  $A$
is a tuple $(Q,q_0,\delta,\Omega)$ where: $Q$ is a finite set of \emph{states},
$q_0\in Q$ is the \emph{initial state}, $\delta$ is the \emph{transition
function} to be defined later, and $\Omega: Q\leadsto\mathbb{N}$, is a partial
function of \emph{coloring}; in what follows, we say a state is \emph{colored}
if $\Omega$ is defined for the state.

The notion of \emph{transition conditions} over $Q$ is inductively defined as
follows:
\begin{enumerate}
  \item $\bot$ and $\top$ are transition conditions over $Q$.
  \item For every $a\in\calA$, the literals $a$ and $\neg a$ are transition
  conditions over $Q$.
  \item If $q\in Q$, then $q$ is a transition condition over $Q$.
  \item If $q\in Q$ and $p\in[0,1]$, then $\Circle^{\sim
  p}q$ is a transition condition over $Q$, where $\sim\in\{\geq,>\}$.
  \item If $q_1, q_2\in Q$ then both $q_1\vee q_2$ and $q_1\wedge q_2$ are
  transition conditions over $Q$.
\end{enumerate}

The transition function $\delta$ assigns each state $q\in Q$ a transition
condition over $Q$.

We denote by $R_A$  the \emph{derived graph} of $A$,
its vertex set is just $Q$, and there is an edge from $q_1$ to $q_2$
iff $q_2$ appears in $\delta(q_1)$.
We say that $A$ is \emph{well-structured}, if for every path
$q_1,q_2,\ldots,q_n$ that forms a cycle (i.e., $q_1=q_n$)
in $R_A$, we have that: 1) there
exists some $1\leq i< n$ such that $\delta(q_i)=\Circle^{\sim p}q_{i+1}$ with
some $p\in[0,1]$; 2) there exists some $1\leq j<n$ such that $q_j$ is
colored.
In what follows, we are only concerned about well-structured \papa.

Given a pointed Markov chain $(M,s_0)$ with $M=(S,\pmat{T},L)$ and
$s_0\in S$, a \emph{run} of $A$ over $(M,s_0)$ is a $Q\times S$-labeled
tree $(T,\lambda)$ fulfilling: $\lambda(v_0)=(q_0,s_0)$ for the root vertex
$v_0$; and for each internal vertex $v$ of $T$ with $\lambda(v)=(q,s)$ we require
that
\begin{itemize}
 \item $\delta(q)\neq\bot$, and if $\delta(q)=\top$ then $v$ has no child;  	
  \item $a\in L(s)$ if $\delta(q)=a$, and $a\not\in L(s)$ if $\delta(q)=\neg a$;
  \item if $\delta(q)=q_1\wedge
  q_2$ then $v$ has two children $v_1$ and $v_2$ respectively having
  $\lambda(v_1)=(q_1,s)$ and $\lambda(v_2)=(q_2,s)$;
  \item  if $\delta(q)=q_1\vee q_2$ then $v$ has one child $v'$ with
  $\lambda(v')\in\{(q_1,s),(q_2,s)\}$;
  \item $v$ has one child $v'$ having $\lambda(v')=(q',s)$, if
  $\delta(q)=q'$;
  \item  if $\delta(q)=\Circle^{\sim p}q'$ then $v$ has a set of children
  $v_1,\ldots,v_n$ such that $\lambda(v_i)=(q',s_i)$, where 
  $\sum_{i=1}^n\pmat{T}(s,s_i)\sim p$.
\end{itemize}
For an infinite branch $\tau =v_0,v_1,\ldots$ of $T$, let $n_\tau$
be the number
$$\max\{~n \mid \text{there are infinitely many } i \text{ s.t. }
\Omega(\proj\nolimits_1(\lambda(v_i)))=n\}$$
where $\proj_1(q,s)=q$.
A run $(T,\lambda)$ is \emph{accepting} if $n_\tau$ is an even number,
for every infinite branch $\tau$ of $T$.
A pointed Markov chain $(M,s_0)$ is \emph{accepted} by $A$ if $A$ has
an accepting run over it. We denote by $\lang(A)$ the set consisting of
pointed Markov chains accepted by $A$.

\begin{theorem}
\label{Thm: Formula2Automaton}
Given a closed \pmutl formula $f$, there is a \papa $A_f$ such that:
$M,s\models f$ iff $(M,s)\in\lang(A_f)$, for each pointed Markov chain
$(M,s)$.
\end{theorem}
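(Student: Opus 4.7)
The plan is to imitate the classical translation from $\mu$-calculus to alternating parity tree automata (in the spirit of Emerson--Jutla and Wilke's approach used in \cite{Wil00}), adapting it to handle the probabilistic next operator $\opX^{\sim p}$ via the transition condition $\Circle^{\sim p}q$. By Thm.~\ref{Thm: Guarded} we may assume $f$ is guarded. I would take the state set $Q$ of $A_f$ to be the set of all subformulas of $f$ (up to suitable identification of bound variables with their binders) together with $f$ itself as $q_0$. Each subformula $g$ becomes a state $q_g$, and the transition function $\delta$ is defined structurally: $\delta(q_\top)=\top$, $\delta(q_\bot)=\bot$, $\delta(q_a)=a$, $\delta(q_{\neg a})=\neg a$, $\delta(q_{g_1\wedge g_2})=q_{g_1}\wedge q_{g_2}$, $\delta(q_{g_1\vee g_2})=q_{g_1}\vee q_{g_2}$, $\delta(q_{\opX^{\sim p}g})=\Circle^{\sim p}q_g$, $\delta(q_{\sigma Z.g})=q_g$ for $\sigma\in\{\mu,\nu\}$, and $\delta(q_Z)=q_g$ whenever $\binder_f(Z)=\sigma Z.g$. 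Thus bound variables are ``unfolded'' in one step to the body of their binder.

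Next I would define the coloring $\Omega$ only on fixpoint states $q_{\sigma Z.g}$: assign to such a state an even number if $\sigma=\nu$ and an odd number if $\sigma=\mu$, with the actual numerical value chosen so that outer binders dominating inner ones of opposite type receive larger priorities. Concretely, one ranks the binders of $f$ by dependency/alternation depth so that whenever $\binder_f(Z)$ syntactically contains $\binder_f(Z')$ and the two are of different parity, $\Omega(q_{\binder_f(Z)})>\Omega(q_{\binder_f(Z')})$. Well-structuredness of $A_f$ then follows from guardedness: any cycle in the derived graph must pass through some $\opX$-subformula (giving a $\Circle^{\sim p}$-edge) and through some fixpoint state (hence a colored state), since by guardedness every back-edge through a bound variable $Z$ is preceded, along the chain of $\delta$-unfoldings leading to it, by a modal step and by its binder $\binder_f(Z)$.

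The correctness claim $M,s\models f\iff (M,s)\in\lang(A_f)$ I would prove by a standard two-direction simulation argument between runs of $A_f$ and the fixpoint iteration that defines $\semantics{f}_M$. For the ``$\Rightarrow$'' direction, given $M,s\models f$, one builds a run top-down by choosing disjuncts according to the truth witnesses and unfolding each $\opX^{\sim p}g$-node to the set of successors realising the probability constraint; the parity condition on infinite branches follows because any infinite branch must get trapped in an outermost recurring binder, which is a $\nu$-binder (otherwise the state would not belong to the corresponding $\mu$-approximant). For ``$\Leftarrow$'' one argues contrapositively using the dual automaton / signature assignment: an infinite branch that loops through an outermost $\mu$-binder yields a descending ordinal signature, contradicting acceptance.

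The main obstacle, as in the classical case, is the correctness proof for nested fixpoints: pinning down which binder is the ``outermost recurring'' one along an infinite branch and matching this to the approximant-based semantics of $\mu Z.\,\cdot$ and $\nu Z.\,\cdot$, so that the parity ordering of $\Omega$ precisely reflects the alternation of $\mu/\nu$. Once the priority assignment is chosen to respect the binder dependency order and guardedness guarantees that every cycle is both modal and colored, the remainder is a faithful adaptation of the non-probabilistic proof, with the only new ingredient being that an $\opX^{\sim p}g$-node is expanded not to a single successor state of $M$ but to a set of successors whose cumulative transition mass satisfies $\sim p$, which is exactly the semantic clause for $\opX^{\sim p}$.
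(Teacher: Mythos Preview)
Your construction is essentially the paper's: states are subformulae, $\delta$ is defined structurally with $\opX^{\sim p}$ mapped to $\Circle^{\sim p}$, fixpoints unfold to their bodies, and correctness is reduced to the non-probabilistic argument of \cite{Wil00} with the single new case being the $\Circle^{\sim p}$-transition. The only bookkeeping difference is that the paper keeps $q_Z$ and $q_{\binder_f(Z)}$ as distinct states (with $\delta_f(q_Z)=q_{\binder_f(Z)}$) and places the colour on $q_Z$, whereas you identify the two and colour the merged fixpoint state; your identification is what makes your colouring work, since without it the cycle $q_Z\to q_g\to\cdots\to q_Z$ would bypass $q_{\sigma Z.g}$ entirely.
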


\begin{proof}
We just let
$A_f=(Q_f,q_f,\delta_f,\Omega_f)$, where:
\begin{itemize}
  \item $Q_f=\{q_g\mid g \textrm{ is a subformula of }f\}$, and hence $q_f\in
  Q_f$;
  \item $\delta_f$ is defined as follows:
	\begin{itemize}
	  \item $\delta_f(q_\bot)=\bot$ and $\delta_f(q_\top)=\top$;
	  \item $\delta_f(q_a)=a$ and $\delta_f(q_{\neg a}) = \neg a$;
	  \item $\delta_f(q_{g_1\wedge g_2})=q_{g_1}\wedge q_{g_2}$
	  and $\delta_f(q_{g_1\vee g_2})=q_{g_1}\vee q_{g_2}$;
	  \item $\delta_f(q_{\opX^{\sim p}g})=\Circle^{\sim p} q_g$;
	  \item $\delta_f(q_{\mu Z.g})=q_g$ and $\delta_f(q_{\nu Z.g})=q_g$;
	  \item $\delta_f(q_Z)=q_{\binder_f(Z)}$.
	 \end{itemize}
  \item $\Omega_f$ is defined at every state $q_Z$ with $Z\in\calZ$ fulfilling:
  	If $Z$ is a $\mu$-variable (resp. $\nu$-variable), then
  	$\Omega_f(q_Z)$ is the minimal odd (resp. even) number which is greater than
  	every $\Omega_f(q_{Z'})$ such that $\binder_f(Z')$ is a subformula of
  	$\binder_f(Z)$.
\end{itemize}

It could be directly examined that $A_f$ is well-structured since $f$ is guarded.
The proof of equivalence can be similarly done as that in \cite{Wil00} | the
only different induction step is to deal with transitions being of $\Circle^{\sim p}
q$ (in that paper, the corresponding cases are $\Box q$ and $\Diamond q$).
Actually, we can see that if a \papa $(Q,q,\delta,\Omega)$ corresponds to the
\pmutl formula $g$, then the \papa
$(Q\cup\{q'\},q',\delta[q'\mapsto\Circle^{\sim p}q],\Omega)$ must correspond to
$\opX^{\sim p}g$.
\end{proof}

\section{Satisfiability Decision}
\label{Sec: Decision}

It is known from Section \ref{Sec: Automata}  that
the \sat problem of \pmutl could be reduced to the \emp problem of \papa.
In this section, we will further  reduce it to parity game solving.

A \emph{parity game} $G$ is a tuple $(V, E, C)$, where: $V$
is a finite set of \ \emph{locations}, and
$V$ could be partitioned into two disjoint sets $V^0$ and $V^1$;
$E\subseteq V\times V$ is the set of  \emph{moves},
required to be total;
and $C:V\leadsto\mathbb{N}$ is a partial function of \emph{coloring},
and we say a location $v$ is \emph{colored}, if $C(v)$ is defined.
In addition, for the game $G$, we require that each loop involves at least one
colored location.

Two players | player $0$ and player $1$, are
respectively in charge of $V^0$ and $V^1$ when $G$ is being played.
A \emph{play} of $G$ starting from $v_0\in V$ is an infinite sequence of
locations $v_0,v_1,\ldots$ made by player $0$ and player $1$ |
for every $i\in\mathbb{N}$, the location
$v_{i+1}$ is chosen by player $0$ (resp. player $1$)
with $(v_i,v_{i+1})\in E$ whenever $v_i\in V^0$ (resp. $v_i\in V^1$).

Player $0$ (resp. player $1$)
\emph{wins} the play $v_0,v_1,\ldots$ if the maximal color occurring
infinitely often in it is
even (resp. odd) | and we say that a color $c$ \emph{occurs} in
this play if there is some $v_i$ with $C(v_i)=c$.

A  \emph{winning strategy} for player $i$ is a mapping $H_i:V^*\cdot V^i\to V$,
such that for every play $v_0,v_1,\ldots$, player
$i$ always wins if $v_{j+1}= H_i(v_0,\ldots,v_j)$ whenever $v_j\in V^i$.
In addition, $H_i$ is \emph{memoryless} if $H_i(v_0,\ldots,v_j)$ agrees
with $H_i(v_j)$ for every $j$.

\begin{theorem}[\cite{GH82,Zie98,Jur98}]
\label{Thm: Game-winning}
For a parity game $G$, from every location,
there is exactly one player having a winning strategy.
The problem of deciding the winner at a location
is in \up$\cap$\coup.
In addition, if a player has a winning strategy then she also
has a memoryless one from the same location.
\end{theorem}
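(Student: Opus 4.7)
The plan is to prove the three claims of Theorem~\ref{Thm: Game-winning} in two phases: first, a joint proof of determinacy and positional (memoryless) determinacy via Zielonka's inductive construction; then, a reduction of the decision problem to \up$\cap$\coup by combining memoryless determinacy with Jurdziński's small progress measures.

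For the first phase I would induct on the pair $(d,|V|)$ ordered lexicographically, where $d = \max_v C(v)$ is the largest color used in $G$. The base case, in which only one color appears, is immediate from the well-structuredness assumption that every loop visits a colored location: the parity of that single color determines the winner everywhere, and the trivial ``pick any successor'' strategy is positional and winning. For the inductive step let $U = C^{-1}(d)$ and assume WLOG $d$ is even, so that visiting $U$ infinitely often favors player~$0$. Compute the player-$0$ attractor $A$ of $U$; the attractor strategy is positional by construction. If $V \setminus A = \emptyset$, iterating the attractor strategy followed by the inductively obtained positional strategy on the subgame induced after each visit to $U$ yields a global positional winning strategy for player~$0$. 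Otherwise, the subgame $G'$ on $V \setminus A$ is a trap for player~$0$'s opponent and has strictly fewer locations, so by induction it is positionally determined. Let $W_1' \subseteq V \setminus A$ be player~$1$'s winning region in $G'$. If $W_1' = \emptyset$ the previous case applies; otherwise take the player-$1$ attractor of $W_1'$ in $G$, stitch the attractor strategy to the positional strategy on $W_1'$, and recurse on the strictly smaller complement.

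For the second phase I would first use memoryless determinacy to place the problem in \np$\cap$\conp by guessing a positional strategy $\sigma$ for the claimed winner and verifying in polynomial time that in the one-player game obtained by fixing $\sigma$, no reachable cycle has a ``bad'' maximum color (a standard SCC/parity computation). To sharpen to \up$\cap$\coup I would appeal to Jurdziński's small progress measures: labelings $\rho: V \to \mathbb{N}^{\lceil d/2 \rceil} \cup \{\top\}$ satisfying a local monotonicity inequality across every edge. The key facts are that (i)~player~$0$ wins from $v$ iff some progress measure is finite at $v$, (ii)~progress measures form a complete lattice under pointwise order, and (iii)~the pointwise infimum yields a \emph{unique} least progress measure whose finite support coincides with player~$0$'s winning region. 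This unique witness has polynomial bit-size and is verified by a linear scan of $E$, giving \up membership. Symmetry (together with determinacy) yields \coup.

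The main obstacle I anticipate is in the inductive step of the first phase: ensuring that the positional strategies returned by the two recursive calls can be glued to the attractor strategies without smuggling in memory. This hinges on a careful use of the trap property, namely that once we enter a region the opponent cannot leave it, so that each positional sub-strategy remains winning on its own region when embedded back into $G$. Once positional determinacy is established, the \up$\cap$\coup bound follows almost mechanically from the uniqueness of the least progress measure.
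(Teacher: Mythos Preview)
The paper does not supply its own proof of this theorem; it is quoted as a known result from the cited references \cite{GH82,Zie98,Jur98}. Your two-phase sketch correctly reconstructs the standard arguments from precisely those sources---Zielonka's attractor-based induction on $(d,|V|)$ for positional determinacy and Jurdzi\'nski's unique least progress measure for the \up$\cap$\coup bound---so there is nothing further to compare (one terminological slip: the complement $V\setminus A$ of a player-$0$ attractor is a trap \emph{for} player~$0$, not for her opponent, though your subsequent use of it is correct).
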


We use $\winning_i(G)$ to denote the set
consisting of all locations from which player $i$ has a winning strategy.

Given a \papa $A=(Q,q,\delta,\Omega)$, a \emph{gadget} $D$ of $A$ is a finite
directed acyclic digram
 $(P,\gamma)$ where $P\subseteq Q$, $\gamma\subseteq P\times P$,
and for each $q\in P$:
\begin{enumerate}
 \item if $\delta(q)=q'$, then $q'\in P$ and $(q,q')\in\gamma$;
  \item if $\delta(q)=q_1\wedge q_2$ then $q_1,q_2\in P$, and
  $(q,q_1),(q,q_2)\in\gamma$;
  \item if $\delta(q)=q_1\vee q_2$ then there is some $i\in\{1,2\}$ such that
  $q_i\in P$ and $(q,q_i)\in\gamma$,
  \item  $q$ has no successor for the other cases.
\end{enumerate}
For convenience, we sometimes directly write $q\in D$ whenever $D=(P,\gamma)$
and $q\in P$.
We denote by $\calD(A)$ the set consisting of all gadgets of $A$.
Since we require that each \papa $A$ is well-structured, then
$\calD(A)$ must be a finite set.

Given a sequence of gadgets $D_1,D_2,\ldots$ such that $D_i=(P_i,\gamma_i)$, an
\emph{infinite path} within it is a sequence of states
$q_{1,1},\ldots,q_{1,\ell_1},q_{2,1},\ldots,q_{2,\ell_2},\ldots$ such that
each $(q_{i,j},q_{i,j+1})\in\gamma_i$ and $\delta(q_{i,\ell_i})=\Circle^{\sim p_i} q_{i+1,1}$
for some $p_i\in[0,1]$.
We say such an infinite path is \emph{even} (resp. \emph{odd}) if the maximal
color (w.r.t. $\Omega$) occurring infinitely often is even (resp. odd).

We say that a gadget $D=(P,\gamma)$ is \emph{incompatible} if there
exist $q_1,q_2\in P$ and $\delta(q_1)=a$, $\delta(q_2)=\neg a$ for some
$a\in\calA$; or there is some $q\in P$ with $\delta(q)=\bot$.
Otherwise, we say that $D$ is \emph{compatible}.

Let $D$ be a gadget and $\Gamma=\{D_1,\ldots,D_k\}$ be a set of gadgets,
we denote by $\Gamma\Vdash D$ if there exist
$k$ positive numbers $x_1,\ldots,x_k$ such that:
   $\sum_{i=1}^{k}x_i\leq 1$, and
   for each $q\in D$ with $\delta(q)=\Circle^{\sim p}q'$, we have
    $\sum_{q'\in D_i}x_i\sim p$.
We in what follows call $x_1,\ldots,x_k$ the \emph{enabling condition}.
Note that the relation $\Vdash$ could be decided by solving a linear system
of inequality.

According to automata theory, we may construct a deterministic
(word) parity automaton $\widetilde{A}=(\widetilde{Q},\widetilde{q},\widetilde{\delta},
\widetilde{\Omega})$ were
$\widetilde{\delta}:\widetilde{Q}\times\calD(A)\to\widetilde{Q}$ and $\widetilde{\Omega}$
is a total coloring function.
It takes a gadget
sequence as input, and accepts it if every gadget in it is compatible and every infinite path
within it  is even.

Then, we may create a parity game $G_A=(V_A,E_A,C_A)$ for the \papa $A$, in
detail:
\begin{itemize}
  \item $V_A=V^0_A\cup V^1_A$,
where $V^0_A=2^{\calD(A)\times\widetilde{Q}}$ and $V^1_A=\calD(A)\times\widetilde{Q}$.
\item $E_A=\{(\{(D_1,\widetilde{q_1}),\ldots,(D_k,\widetilde{q_k})\},(D_i,\widetilde{q_i}))\mid 1\leq i\leq
k\}\cup\\ \{((D,\widetilde{q}),\{(D_1,\widetilde{q_1}),\ldots,(D_k,\widetilde{q_k})\})\mid (D_1,\ldots,D_k)\Vdash D,\\
	\text{ and each } \widetilde{q_i}=\widetilde{\delta}(\widetilde{q},D_i)\}$.
\item $C_A(D,\widetilde{q})=\widetilde{\Omega}(\widetilde{q})$,
hence every location in $V^1_A$ is colored.
\end{itemize}

\begin{theorem}
\label{Thm: Emp2Game}
Let the \papa $A=(Q,q,\delta,\Omega)$, then $\lang(A)\neq\emptyset$ if and
only if there is some $D\in\calD(A)$ with $q\in D$ such that
$\{(D,\widetilde{\delta}(\widetilde{q},D))\}\in\winning_0(G_A)$.
\end{theorem}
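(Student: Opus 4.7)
My plan is to prove both directions by constructing a correspondence between accepting \papa runs on pointed Markov chains and winning strategies for player $0$ in $G_A$. The deterministic word automaton $\widetilde{A}$ is the bridge: it internalizes the universal check over all infinite paths within a gadget sequence, so the game only needs to track the sequence of gadgets actually traversed.

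For the direction $(\Rightarrow)$, assume $(M,s_0)\in\lang(A)$ with accepting run $(T,\lambda)$. I would extract a gadget $D_s$ for each reachable Markov state $s$, consisting of the sub-DAG of $\lambda$-projections at $s$ down to the next $\Circle$-transitions: the $\vee$-choices of the run become edges of $D_s$, while $\wedge$-branches are retained. By construction $q\in D_{s_0}$. I would then build a strategy for player $0$ from $\{(D_{s_0},\widetilde{\delta}(\widetilde{q},D_{s_0}))\}$, carrying as hidden memory the current Markov state $s$. At each $V^0$-position, player $0$ chooses the pair corresponding to some Markov successor $s'$ of $s$; one verifies that any valid enabling tuple played by player $1$ at the previous $V^1$-position must contain a gadget matching $D_{s'}$ for an appropriate $s'$, because the tuple is constrained by the $\Circle$-structure of $D_s$, which the accepting run already satisfies. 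The resulting plays project to gadget sequences that appear in the accepting run, so every internal infinite path is an infinite branch of $(T,\lambda)$ and hence has even maximum color; therefore $\widetilde{A}$ accepts and player $0$ wins.

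For the direction $(\Leftarrow)$, let $\sigma_0$ be a memoryless winning strategy from $\{(D,\widetilde{\delta}(\widetilde{q},D))\}$, whose existence is granted by Theorem~\ref{Thm: Game-winning}. I would construct a pointed Markov chain with states $(D',\widetilde{q'})$ ranging over the $V^1$-locations reachable by plays consistent with $\sigma_0$, augmented with a sink state. At each state $(D',\widetilde{q'})$, I would fix a specific valid enabling tuple with positive weights $x_1,\ldots,x_k$ summing to at most~$1$; the Markov transitions are $x_i$ to the corresponding successor and $1-\sum_{i}x_i$ to the sink. The labeling is read off from the literals in $D'$ (well-defined because compatibility of reachable gadgets is forced by $\widetilde{A}$-acceptance). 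I would then build a run top-down: at a node with PAPA state $q''\in D'$, unfold $D'$ locally; at leaves $\delta(q'')=\Circle^{\sim p}q'$, the $\Circle$-children are exactly those Markov successors whose gadgets contain $q'$, and the enabling condition guarantees their probabilities sum to $\sim p$. Every infinite branch of the run projects to an infinite path within the gadget sequence of a $\sigma_0$-consistent play and is therefore even by the winning condition, so the run is accepting.

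The main obstacle is reconciling the set-versus-element structure of $V^0$ and $V^1$ with a concrete Markov chain in the backward direction: the enabling tuple chosen at each state should consist entirely of successors lying in $\winning_0$, so that every Markov successor yields an accepting sub-run rather than just the one traced by $\sigma_0$. Resolving this requires showing that whenever $(D',\widetilde{q'})\in\winning_0\cap V^1$, there is a valid enabling tuple all of whose induced $V^1$-locations belong to $\winning_0$; I expect this to follow from positional determinacy together with a closure argument on $\Vdash$ restricted to $\winning_0$-elements. A secondary subtlety in the forward direction is ensuring that the gadgets $D_s$ extracted from the run are compatible with \emph{any} enabling tuple that player $1$ might play, which relies on the $\Circle$-consistency that is already enforced inside the run.
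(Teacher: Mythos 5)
Your overall architecture --- gadgets extracted per Markov state from an accepting run, the deterministic parity automaton $\widetilde{A}$ policing the infinite paths through the gadget sequence, and a sink state absorbing the residual probability mass --- matches the paper's. But there is a genuine gap: you have assigned the two kinds of moves to the wrong players, and both directions break as a consequence. In the reduction the paper's proof actually carries out, it is player $0$ who, standing at an element location $(D,\widetilde{q})$, proposes the tuple $\{(D_1,\widetilde{q_1}),\ldots,(D_k,\widetilde{q_k})\}$ with $(D_1,\ldots,D_k)\Vdash D$ (this move synthesizes the successor distribution of the Markov chain and, since each $D_i$ fixes its own $\vee$-resolutions, also resolves the disjunctions), while player $1$ picks one element of the tuple (the universal choice of which branch of the run to descend into). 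You have it the other way around, with player $1$ playing the enabling tuple. Under that assignment your forward direction rests on the claim that ``any valid enabling tuple played by player $1$ must contain a gadget matching $D_{s'}$,'' and this is false: $\Vdash$ constrains only which target states belong to which chosen gadgets and what the weights $x_i$ sum to; it says nothing about how the chosen gadgets resolve disjunctions. For instance, if $\delta(q)=\Circle^{\geq 1}q'$ and $\delta(q')=q_1\vee q_2$ where only the $q_1$-branch is accepting, player $1$ may legally answer with the single gadget routing $q'$ to $q_2$ with weight $1$; it satisfies $\Vdash$ but traps player $0$ on an odd path even though $\lang(A)\neq\emptyset$. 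So with your role assignment the equivalence itself fails.

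The same inversion is what creates the ``main obstacle'' you flag in the backward direction, and your proposed fix does not close it: if player $1$ chooses the tuples, then membership of a set location in $\winning_0$ only guarantees that player $0$ can win from \emph{some} element of \emph{every} tuple, and no closure argument on $\Vdash$ yields a tuple \emph{all} of whose elements are winning (the linear constraints may force every legal tuple to contain a gadget from which player $0$ cannot win; player $0$ simply never visits it, so positional determinacy gives you nothing here). With the roles as in the paper's proof the obstacle evaporates: the memoryless winning strategy $H_0$ maps each feasible $(D^l,\widetilde{q^l})$ to a tuple $\{l_1,\ldots,l_k\}$, and since player $1$ may continue from any $l_j$, every $l_j$ is itself feasible and winning; the enabling condition of that very tuple supplies the transition probabilities, so every Markov successor --- not just one --- carries an accepting sub-run. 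To be fair, the paper's displayed definition of $V^0_A$ and $V^1_A$ literally supports your reading (the definition and the proof are inconsistent on this point), but the version you chose is the one under which the reduction is unsound, so the argument must be carried out with the ownership swapped.
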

\begin{proof}
$\Longrightarrow$) Suppose that there is some pointed Markov chain
$(M=(S,\pmat{T},L),s)\in\lang(A)$, then there exists some accepting run
$(T,\lambda)$ of $A$ on $(M,s)$.
\par We say a vertex $v$ of $T$ is a \emph{modal vertex} if $
\delta(\proj_1(\lambda(v)))$ is of the form $\Circle^{\sim p} q'$.
We  denote by $\depth{v}$ the \emph{modal depth}
of $v$, i.e., the number of modal vertices among the ancestors of $v$.
\par
From each vertex $v$ of $T$, we may obtain a set
of vertices, denoted as $\cls(v)$, which involves $v$ and all its descendants
with the same modal depth.  Since $A$ is well-structured,
then $\cls(v)$ must be a finite set. We also lift the notation by defining
$\cls{V}=\bigcup_{v\in V}\cls(v)$ for a finite vertex set $V$.
\par
In addition, each finite vertex set $V$ of $T$ derives a gadget $\funcD(V)=(P_V,\gamma_V)$,
where $P_V=\{\proj_1(\lambda(v))\mid v\in V\}$, and $(q_1,q_2)\in\gamma_V$ if
there are two vertices $v_1,v_2\in V$, such that $\proj_1(\lambda(v_i))=q_i$
for $i=1,2$ and $v_2$ is a child of $v_1$.
\par
Let $v_0$ be the root vertex of $T$,
then we have $\lambda(v_0)=(q,s)$.
We now let $D=D_0= \funcD(\cls(v_0))$, then for each play
$\Delta_0,(D_0,\widetilde{q_0}),\Delta_1,(D_1,\widetilde{q_1}),\Delta_2,\ldots$
with $\Delta_0=(D,\widetilde{\delta}(\widetilde{q},D))$ and each $D_i=(P_i,\gamma_i)$,
player $0$ can control it and make the play to fulfill the following property:
\begin{description}
 \item[(*)] For each $i$, there exists a finite set of vertices $V_i$ having the same modal depth $i$,
 	and there exists a state $s_i$ of $M$;
         and $q'\in P_i$ iff there is some $v_{q'}\in V_i$ such that $\lambda(v_{q'})=(q',s_i)$.
 In addition, $(q_1,q_2)\in\gamma_i$ iff $v_{q_2}$ is a child of $v_{q_1}$.
\end{description}
For $i=0$,  we have $V_0=\cls(v_0)$ and $s_0=s$.
Assume that (*) holds at step $i$, then player $0$ chooses the next location
guided by the run as following:
First, let $V'_i$  be all modal vertices among $V_i$,
and let $V''_i$ be the set consisting of children of vertices in $V'_i$.
Then, $V''_i$ can be partitioned into several sets $V''_{i,1},\ldots,V''_{i,k}$
according to the second component (assume $\proj_2(\lambda(v'))=s_{i,j}$
for $v'\in V''_{i,j}$) labeled on the vertices.
Player $0$ then chooses the set
$\{(D_{i,1},\widetilde{q_{i,1}}),\ldots,(D_{i,k},\widetilde{q_{i,k}})\}$ as the
next location, where $D_{i,j}=\funcD(\cls(V''_{i,j}))$ and
$\widetilde{q_{i,j}}=\widetilde{\delta}(\widetilde{q_i},D_{i,j})$.
\par
Then, according to the construction, for each $(D_{i,j},\widetilde{\delta}(\widetilde{q_{i,j}}))$
we have some state $s_{i,j}$ and the vertex set $\cls(V''_{i,j})$
making  property (*) holds, no matter how player $1$ chooses.
Let $x_1=\pmat{T}(s_i,s_{i,1}),\ldots, x_k=\pmat{T}(s_i,s_{i,k})$, we definitely
have $\sum_{j=1}^{k}x_j\leq 1$ and we also have $\sum_{q''\in D_{i,j}}x_j\sim p$
for each $q'\in D_i$ such that $\delta(q')=\Circle^{\sim p}q''$ because
$(T,\lambda)$ is an accepting run. Therefore, $(D_{i,1},\ldots,D_{i,k})\Vdash
D_i$ holds.
\par
We assert that each $D_i=(q_1,\ldots,q_\ell)$ must be compatible
| since $(T,\lambda)$ is accepting, no such $q'\in D_i$ having $\delta(q')=\bot$,
and if there exist $q_1,q_2\in D_i$ with $\delta(q_1)=a$ and $\delta(q_2)=\neg a$,
then we will both have $a\in L(s_i)$ and $a\not\in L(s_i)$.
 Also note that each infinite path within
$D_0,D_1,\ldots$ corresponds to the first component of the labelings of an
infinite branch of $T$, hence it must be even.
According to $\widetilde{A}$, we then conclude that this
strategy is winning for player $0$ form $\{(D_0,\widetilde{\delta}(\widetilde{q}))\}$.
\par
$\Longleftarrow$) Let $H_0$ be the
(memoryless) winning strategy of player $0$ from
$\{(D,\widetilde{\delta}(\widetilde{q}))\}$, where $D$ is some gadget involving $q$.
We say that a location $l=(D^l,\widetilde{q^l})\in
V^1_A$ is \emph{feasible} if $l$ may appear in some play under control of player
$0$ according to $H_0$.
We  create  a Markov chain $M=(S,\pmat{T},L)$ as follows.
\begin{itemize}
  \item First, let $S=\{s_l\mid l \text{ is a feasible location}\}\cup\{s'\}$.
  \item Second, since each feasible location must be compatiable, then we may
  let $L(s_l)=\{a\in\calA\mid\text{there is some } q' \text{ in } D^l\}$.
  Meanwhile, we let $L(s')=\emptyset$.
  \item The transition matrix $\pmat{T}$ is determined as follows: For each
  feasible location $l$, suppose that
  $H_0(l)=\{l_1=(D^{l_1},\widetilde{q^{l_1}}),\ldots,l_k=(D_k,\widetilde{q^{l_k}})\}$,
  since  $\{D^{l_1},\ldots,D^{l_k}\}\Vdash D^l$ then
  we have a set of enabling condition $x_1,\ldots,x_k$.
We let $\pmat{T}(s_l,s_{l_j})=x_j$ for each $j$,
  let $\pmat{T}(s_l,s')=1-\sum_{j=1}^k x_j$, and let $\pmat{T}(s',s')=1$.
\end{itemize}
What left is to show that $(M,s_{l_0})\in\lang(A)$, where $l_0$ is just
$(D,\widetilde{\delta}(\widetilde{q}))$.
For each gadget $D^l$ such that $l$ is feasible, we could obtain
a forest $(T_l,\lambda_l)$, and in which each vertex $q'$ is
labeled with $(q',s_l)$.
Then from $T_{l_0}$ (which is an exact tree with $(q,s_{l_0})$ labeled in the root),
with a top-down manner, we
connect the so far added tree $T_l$ with every $T_{l'}$ such that $l'\in H_0(l)$
| i.e., for each $q'$ in $T_{l}$ with $\delta(q')=\Circle^{\sim p}q''$,
we add the vertex $q''$ in $T_{l'}$ as a child | it can be seen that it must
be the case that some edges connecting some leaves of $T_l$ and the root(s)
of $T_{l'}$. We denote the labeled tree finally get as $(T,\lambda)$, and
it is indeed be an accepting run of $A$ over $(M,s_{l_0})$.
\end{proof}

Intuitively, player $0$ could extract a winning strategy from
an accepting run of $A$ over any pointed Markov chain;
and conversely, one can construct a pointed Markov chain accepted by
$A$ according to the (memoryless) winning strategy of player 0.

As a consequence of Thm. \ref{Thm: Formula2Automaton},
Thm. \ref{Thm: Game-winning} and Thm. \ref{Thm: Emp2Game}
we have the following main conclusion of this section.

\begin{theorem}
\label{Thm: Decision}
Both the \emp problem of \papa and the \sat problem of \pmutl are decidable,
and both of them are in 2\exptime.
\end{theorem}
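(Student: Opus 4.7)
The plan is to chain the reductions already at hand. Given a closed \pmutl formula $f$, Thm.~\ref{Thm: Formula2Automaton} yields a \papa $A_f$ with $|Q|=O(|f|)$ states and $O(|f|)$ priorities, so \sat of \pmutl reduces linearly to \emp of \papa. Thm.~\ref{Thm: Emp2Game} in turn reduces \emp of $A_f$ to checking, for some gadget $D\in\calD(A_f)$ containing the initial state, whether player $0$ wins $G_{A_f}$ from $\{(D,\widetilde{\delta}(\widetilde{q},D))\}$. It therefore suffices to bound the size of the game $G_A$ built from an arbitrary \papa $A$ with $n=|Q|$ states, and then to invoke a standard parity-game solver.

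I would bound the components of $G_A$ in turn. A gadget is a directed acyclic subgraph of $Q\times Q$, so $|\calD(A)|\leq 2^{O(n^2)}$ is singly exponential. For $\widetilde{A}$, the property ``every gadget in the input is compatible, and every infinite path through the gadget sequence is even'' is a universal parity condition whose underlying transition structure is essentially $Q$ itself; determinizing it (e.g., via Safra's construction or a direct parity determinization) yields $|\widetilde{Q}|=2^{O(n\log n)}$ states and $O(n)$ priorities. The relation $\Vdash$ reduces to feasibility of a linear system of inequalities, hence is polynomially checkable. Consequently $|V_A^1|=|\calD(A)|\cdot|\widetilde{Q}|$ is singly exponential in $n$, while $|V_A^0|=2^{|V_A^1|}$ is doubly exponential.

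Since parity games with $N$ vertices and $k$ priorities are solvable in time $N^{O(k)}$ (and, per Thm.~\ref{Thm: Game-winning}, even lie in \up$\cap$\coup), and here $N=2^{2^{O(n^2)}}$ with $k=O(n)$, the game $G_A$ can be decided in time $(2^{2^{O(n^2)}})^{O(n)}=2^{2^{O(n^2)}}$. Combined with the previous reductions, both \emp of \papa and \sat of \pmutl fall within 2\exptime.

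The main obstacle is the double-exponential blow-up at $V_A^0=2^{V_A^1}$: to realize a probabilistic transition, player $0$ must commit to an entire \emph{set} of (gadget, automaton-state) pairs at once, so her moves range over subsets of a space that is already singly exponential. Keeping the overall bound at 2\exptime therefore hinges on ensuring that both $|\widetilde{Q}|$ and its priority count stay singly exponential in $n$; a careful choice of the determinization of the underlying universal parity automaton is the step I would spend the most care on, since any looser bookkeeping of $\widetilde{A}$ would push the exponent above the double-exponential threshold.
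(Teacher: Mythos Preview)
Your proposal is correct and follows essentially the same approach as the paper: chain Thm.~\ref{Thm: Formula2Automaton} and Thm.~\ref{Thm: Emp2Game}, bound the size of $G_A$ (singly exponential $\widetilde{Q}$ and $\calD(A)$, doubly exponential $V_A^0=2^{V_A^1}$), and invoke a polynomial-in-$N$ parity-game solver. Your write-up is in fact more explicit than the paper's own sketch, which merely asserts that an $n$-state \papa yields a game of size $2^{2^{\mathcal{O}(n)}}$ and defers the coloring-number analysis to \cite{Wil00}; your $2^{2^{O(n^2)}}$ bound is looser but still comfortably within 2\exptime, and your identification of the determinization of the universal parity automaton underlying $\widetilde{A}$ as the delicate step is exactly the right emphasis.
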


Indeed, from Thm. \ref{Thm: Formula2Automaton} one can get a \papa
whose scale is linear in the size of the input formula, and
an $n$-state \papa could be converted to a parity game with scale $2^{2^{\mathcal{O}(n)}}$.
From standard game theory (see \cite{Jur98,Wil00},
 and see \cite{Sch08} for an improved bound), and with
a similar analysis of \cite{Wil00} (see also the analysis of the coloring number
in that paper), one can infer that this problem is in 2\exptime.

\section{Discussion}
\label{Sec: Discussion}

In this paper, we present the logic \pmutl, a simple and succinct
probabilistic extension of \mutl.
We have compared the expressiveness of these two kinds of logics:
In general, \pmutl captures `local' and `stepwise' probabilities;
whereas \pctl could describe `global' probabilities  in
the system.
Hence, these two logics are orthogonal and complementary,
and one can obtain a more powerful and expressive
logic by combing them together, as done in \cite{CKP15}. i.e., we may use formulae  like
$(\mu Z.(a\vee\opX^{\geq 0.8}Z))\opU^{\geq 0.6}(\nu Z'.(b\wedge\opF^{>0.3}Z'))$.
Model checking algorithm of such an extension can be acquired from
those of the underlying logics.

In this paper, we have also investigated the decision problem of
\pmutl,  the key issue and the most challenging part
is  to deal with probabilistic quantifiers
when doing reduction to parity games,
which is a highly nontrivial extension of the non-probabilistic case.
As a cost, we have only now got an algorithm with double-exponential
time complexity for solving it |
in contrast, the \sat problem  for
the standard \mutl is in \exptime.


\section*{Acknowledgement} First and foremost, the authors would thank all
the anonymous reviewers for the valuable and helpful comments on this paper.
We would also thank Nir Piterman for his valuable comments on our work.

Wanwei Liu is supported by National Natural
Science Foundation of China (Grant Nos. 61103012, 61379054 and 61272335).
Lei Song is supported by Australian Research Council under Grant
DP130102764. Ji Wang is supported by National Natural
Science Foundation of China (Grant No. 61120106006).
Lijun Zhang (corresponding author) is supported by National Natural
Science Foundation of China (Grant Nos. 61428208, 61472473 and
61361136002), the CAS/SAFEA International Partnership Program
for Creative Research Teams.


\end{document}